\documentclass{article}

\usepackage{arxiv}
\usepackage[utf8]{inputenc} 
\usepackage[T1]{fontenc}    
\usepackage{hyperref}       
\usepackage{url}            
\usepackage{booktabs}       
\usepackage{amsfonts}       
\usepackage{nicefrac}       
\usepackage{microtype}      
\usepackage{graphicx}
\usepackage[authoryear]{natbib}
\usepackage{doi}
\usepackage{amsmath}
\usepackage{amssymb}
\usepackage{cleveref}
\usepackage{longtable}
\usepackage{array}
\usepackage{calc}
\usepackage[normalem]{ulem}
\usepackage[ruled,linesnumbered]{algorithm2e}
\usepackage{setspace}

\newtheorem{theorem}{Theorem}

\newtheorem{remark}{Remark}

\let\ul\uline

\title{Constructing confidence intervals for constrained parameters via valid prior-free inferential models}
\author{{\includegraphics[scale=0.06]{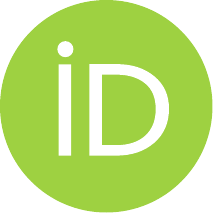}\hspace{1mm}}Hezhi Lu\thanks{Corresponding Author: Hezhi Lu; email: luhz@gzhu.edu.cn} \\
    School of Economics and Statistics \\
    Lingnan Research Academy of Statistical Science \\
    Guangzhou University \\
    Guangzhou 510006, P. R. China
    \And
    Qijun Wu \\
    School of Economics and Statistics \\
    Guangzhou University \\
    Guangzhou 510006, P. R. China
}

\renewcommand{\shorttitle}{Constructing confidence intervals for constrained parameters}

\date{}

\hypersetup{
pdftitle={Constructing confidence intervals for constrained parameters via valid prior-free inferential models},
pdfauthor={Hezhi Lu, Qijun Wu},
pdfkeywords={Constrained parameter, Confidence interval, Improper prior, Inferential model},
}

\begin{document}
\maketitle

\begin{abstract}
Constructing valid inferential methods for constrained parameters in
normal and Poisson distributions represents two fundamental and
important problems in applied statistics, for which there is currently
no unified framework for statistical inference. Most existing studies
assume that the nuisance parameters of the model are known, an
assumption that is often impractical in real-world applications.
However, under the more realistic scenario where nuisance parameters are
unknown, the available Bayesian interval estimation methods fail to
guarantee nominal coverage and thus cannot provide exact inference. To
address these limitations, this paper develops prior-free inferential
model (IM) approaches for parameters of interest in constrained normal
and Poisson models and demonstrates that the confidence intervals (CIs)
obtained from these novel IM methods can achieve exact nominal
coverage. Furthermore, considering the discrete nature of the Poisson
distribution, we employ random weighting techniques to improve the
conservative coverage performance of the IM CIs. Simulation studies show
that the coverage probabilities of the improved nonrandomized
inferential model (NIM) CIs are closest to the prespecified nominal
levels, with corresponding expected lengths shorter than those of
Bayesian intervals in weak signal scenarios, whereas the shorter
expected lengths of Bayesian intervals in strong signal scenarios come
at the cost of sacrificing coverage guarantees. Therefore, the proposed
IM and NIM CIs are superior to the Bayesian CIs. Finally, the
advantages of the proposed methods are confirmed through an analysis of
two experimental datasets on neutrinos in high-energy physics.
\end{abstract}

\keywords{Constrained parameter \and Confidence interval \and Improper prior \and Inferential model}

\textbf{Mathematics Subject Classification:} 62F30, 62P35

\section{Introduction}

In statistical inference, effectively quantifying uncertainty and
constructing interval estimates for parameters with known constraints
such as nonnegativity or boundedness remains a persistent challenge for
both frequentist and Bayesian methods. These problems are particularly
common in scientific fields such as high-energy physics, astronomical
observation, and environmental monitoring. For example, two statistical
inference problems of particular interest to high-energy physicists over
the past two decades are: inference about a nonnegative mean subject to
Gaussian measurement error and inference about a Poisson signal rate in
the presence of background contamination. In this paper, we aim to
develop more efficient inference methods for such constrained parameter
problems, which are essentially based on an inferential model (IM)
framework.

Numerous researchers have investigated constrained parameters under two
distinct modeling assumptions. Under the first assumption, for the
normal case, suppose that \(X\) is the measurement of a nonnegative
quantity, \(\theta\), with a Gaussian error distribution. For
simplicity, the variance is set to \(\sigma^{2} = 1\), which yields the
probability model \(X\sim N(\theta,1)\) subject to the constraint
\(\theta \geq 0\). For the Poisson case, let \(S\sim Poisson(\lambda)\)
denote the number of signal events and \(B\sim Poisson(\varepsilon)\)
denote the number of background events, with a known background rate
\(\varepsilon\) and an unknown signal rate \(\lambda\). Assuming the
independence of \(S\) and \(B\), only their sum \(X = B + S\) is
observed, which follows a Poisson distribution with mean
\(\theta = \varepsilon + \lambda\). Inference on \(\lambda\) is
therefore equivalent to inference on \(\theta\), subject to the physical
constraint \(\theta \geq \varepsilon\). When an inference is performed
on a constrained parameter, traditional Neyman's confidence intervals
(CIs) can yield empty intervals or exhibit unsatisfactory coverage
properties when the data are near or beyond the constraint boundaries.
\citet{Feldman1998} and \citet{Mandelkern2000} proposed unified
approach-based CIs for bounded parameters. \citet{Giunti1999}
demonstrated that the upper bound of the unified CI decreases as
\(\varepsilon\) increases, and introduced a new ordering principle to
modify the unified approach. Modest refinements have been proposed in
\citet{Mandelkern2002}, \citet{Fraser2004}, and \citet{Zhu2007}.
However, \citet{Lu2023} reported that while these frequentist CIs can
ensure a predetermined coverage probability, their actual coverage
probabilities substantially exceed the nominal coverage level.

Bayesian inference provides a more credible interval estimation method
than does the unified interval. However, this construction method yields
unphysically short confidence intervals near the boundary values of the
constraint conditions. For example, \citet{Roe2000} adopted a Bayesian
approach based on a uniform improper prior to constructing a credible
belt for constrained parameters. \citet{Zhang2003} reported that
Bayesian CIs based on a uniform prior still maintain a high coverage
probability. In fact, Bayesian methods that use noninformative priors typically lack a
clear frequentist interpretation and are sensitive to the choice of
prior specifications. To eliminate the influence of prior distributions
on inferential results, \citet{Leaf2012} proposed the elastic belief
(EB) method for constrained parameters and developed a postdata
predictive measure of uncertainty for unknown parameters. Furthermore,
\citet{Lu2023} reported that EB CIs fail to improve the conservative
performance of frequentist and Bayesian confidence intervals;
accordingly, they proposed an IM method based on random weighting,
which can enhance the coverage performance of existing CIs to a certain
extent. Therefore, the optimal CIs of constrained parameters are still
in progress.

Many research findings have been reported regarding the assumption of
the first model; however, \citet{Zhang2003} pointed out that the true
values of the nuisance parameters \(\sigma^{2}\) and \(\varepsilon\) in
normal and Poisson models are difficult to determine in practical
scenarios. They thereby proposed a more general theoretical model: In
the normal case, suppose that independent random variables
\(X\sim N(\theta,\sigma^{2})\) and \(W\sim\sigma^{2}\chi_{r}^{2}\) are
observed, where \(\theta \geq 0\) and \(\sigma > 0\) are unknown
parameters and \(r\) is a positive integer; in the Poisson case,
suppose that we observe \(X = B + S\) and \(W\), where \(B\), \(S\),
and \(W\) are independent random variables for which
\(B\sim Poisson(\varepsilon)\), \(S\sim Poisson(\lambda)\), and
\(W\sim Poisson(m\varepsilon)\), where \(\lambda,\varepsilon \geq 0\)
are unknown parameters and \(m > 0\) is known. The goal is to develop
interval estimation methods for the constrained parameters of interest,
\(\theta\) and \(\lambda\), in the context of normal and Poisson
distribution cases when the nuisance parameters \(\sigma^{2}\) and
\(\varepsilon\) are unknown.

The literature on the assumption of the second model is relatively
scarce. On the basis of our experience, only \citet{Zhang2003}
systematically investigated the construction of credible sets for
parameters under constrained parameter spaces via improper priors. Our
simulation studies reveal that Bayesian credible intervals for the
constrained parameter are unstable and fail to guarantee the nominal
coverage probability. Since the IM framework is highly suited to
scenarios where parameter spaces are subject to constraints, it can
directly integrate the constraint set of parameters with predictive
random sets, thereby addressing the challenges that constrained
parameter spaces pose to frequentist and Bayesian inferential methods.
More importantly, the IM is grounded in Dempster--Shafer
\citep{Dempster2008,Shafer1976} belief function theory: it generates
statistical inference results with frequentist calibration properties
without relying on prior information. Therefore, this paper focuses on
the second theoretical model, which is more aligned with practical
application scenarios, and aims to propose more robust and interpretable
IM-based CIs for constrained parameters. The remainder of the paper is
structured as follows: Section 2 reviews the models and Bayesian
methods from \citet{Zhang2003}. Section 3 introduces the basic framework
of the IM and proposes our IM-based inference methods for constrained
parameters. In Section 4, we compare the coverage performance of the new
method with that of Bayesian approaches through simulation studies. Two
real data examples are presented for illustration in Section 5.
Finally, we conclude with a discussion in the last section.

\section{Existing confidence intervals}

\subsection{Inference for nonnegative quantities with Gaussian errors}

Suppose that \(X\) is the measurement of a nonnegative quantity
\(\theta\), following a Gaussian error distribution
\(N(\theta,\sigma^{2})\) with mean \(\theta\) and variance
\(\sigma^{2}\), where \(\theta\) satisfies the constraint
\(C = \{\theta:\theta \geq 0\}\) and \(\sigma^{2}\) is a nuisance
parameter. In the study of \cite{Zhang2003}, the model further
assumes that the observed random variables
\(X\sim N(\theta,\sigma^{2})\) and
\(W\sim\sigma^{2}\chi_{r}^{2}\)\hspace{0pt} are mutually independent,
where \(\theta \geq 0\) and \(\sigma > 0\) are unknown parameters and
where \(r\) is a positive integer. The goal is to construct a
\(1 - \alpha\) CI for the parameter of interest \(\theta\).

The joint probability mass function of \(W\) and \(X\) is

\[
f\left( w,x|\theta,\sigma \right) = \frac{1}{2^{\frac{r + 1}{2}}\sigma^{r + 1}\sqrt{\pi}\Gamma(r/2)}w^{\frac{r}{2} - 1}\exp\left\lbrack - \frac{(x - \theta)^{2} + w}{2\sigma^{2}} \right\rbrack.
\]

Let \(S^{2} = W/r\) and \(H_{r}( \cdot )\) be the \emph{t}
distribution functions with \(r\) degrees of freedom. Suppose that
\(\theta\) and \(\sigma\) are given the improper prior
\(f(\theta,\sigma) = \frac{1}{\sigma}\) for \(0 < \sigma < \infty\) and
\(0 \leq \theta < \infty\); then, the marginal density of \(X\) and
\(W\) is

\[
f(w,x) = \int_{0}^{\infty}{\int_{0}^{\infty}{\frac{f\left( w,x|\theta,\sigma \right)}{\sigma}d\sigma d\theta}} = \frac{1}{2w}H_{r}(t),
\]

where \(t = x/s\) and \(s^{2} = w/r\). Hence, the posterior density of
\(\theta\) is

\[
g\left( \theta|w,x \right) = \int_{0}^{\infty}{\frac{f\left( w,x|\theta,\sigma \right)}{f(w,x)\sigma}d\sigma = \frac{1}{sH_{r}(t)}h_{r}\left( \frac{\theta - x}{s} \right)},
\]

where \(h_{r}( \cdot )\) is the density of \(H_{r}( \cdot )\).

A level \(1 - \alpha\) Bayesian credible interval for \(\theta\)
requires two functions \(l = l(w,x)\) and \(u = u(w,x)\), for which

\[
1 - \alpha = \int_{l}^{u}{g\left( \theta|w,x \right)d\theta},
\]

and to minimize the length of the interval \(l\) and \(u\) must satisfy

\[
\lbrack l,u\rbrack = \left\{ \theta:g\left( \theta \middle| w,x \right) \geq c \right\},
\]

for some constant \(c\). Hence, \(l = max\{ 0,x - bs\}\) and
\(u = x + bs\), where

\[
b = \max\left\{ H_{r}^{- 1}\left( 1 - \alpha H_{r}(t) \right),H_{r}^{- 1}\left( \frac{1}{2} + \frac{1}{2}(1 - \alpha)H_{r}(t) \right) \right\},
\]

and \(t = x/s\).

\textbf{2.2 Inference about the Poisson rate from a contaminated
observed count}

Suppose we observe \(X = B + S\) and \(W\), where \(B\), \(S\), and
\(W\) are mutually independent random variables such that
\(B\sim Poisson(\varepsilon)\), \(S\sim Poisson(\lambda)\), and
\(W\sim Poisson(m\varepsilon)\). Here, \(\lambda \geq 0\) and
\(\varepsilon \geq 0\) are unknown parameters, whereas \(m > 0\) is a
known constant. The goal is to infer \(\lambda\).

The joint probability mass function of \(W\) and \(X\) is

\[
p\left( w,x|\varepsilon,\lambda \right) = \sum_{k = 0}^{x}{\frac{1}{w!k!(x - k)!}m^{w}}\varepsilon^{w + k}\lambda^{x - k}e^{- \lambda - (m + 1)\varepsilon}.
\]

\cite{Zhang2003} proposed a Bayesian solution for inferring
\(\lambda\). Suppose that \(\lambda\) and \(\varepsilon\) are given an
improper prior density, e.g., \(\varepsilon^{a - 1}e^{- b\varepsilon}\),
for \(0 \leq \varepsilon,\lambda < \infty\) where \(a > 0\),
\(b \geq 0\). Then the marginal probability mass function of \(W\) and
\(X\) is

\[
p(w,x) = \int_{0}^{\infty}{\int_{0}^{\infty}{p\left( w,x|\varepsilon,\lambda \right)}}\varepsilon^{a - 1}e^{- b\varepsilon}d\varepsilon d\lambda = \sum_{k = 0}^{x}{\frac{\Gamma(a + w + k)}{w!k!}\frac{m^{w}}{{(b + m + 1)}^{a + w + k}}.}
\]

Using Bayes' theorem, the posterior density and distribution function of
\(\lambda\) are

\[
g\left( \lambda \middle| w,x \right) = \frac{1}{p(w,x)}\sum_{k = 0}^{x}{\frac{\Gamma(a + w + k)}{w!k!(x - k)!}\frac{m^{w}}{(b + m + 1)^{a + w + k}}\varepsilon^{- \lambda}\lambda^{x - k}},
\]

and

\[
G\left( \lambda \middle| w,x \right) = \frac{1}{p(w,x)}\sum_{k = 0}^{x}{\frac{\Gamma(a + w + k)}{w!k!}\frac{m^{w}}{(b + m + 1)^{a + w + k}}H_{x - k + 1}(\lambda)},
\]

where \(H_{j}( \cdot )\) denotes the gamma distribution function with
shape parameter \(j\) and unit scale parameter. Hence, a Bayesian
credible interval consists of two functions, \(l = l(w,x)\) and
\(u = u(w,x)\), for which

\[
1 - \alpha = G\left( u \middle| w,x \right) - G\left( l \middle| w,x \right),
\]

\[
\lbrack l,u\rbrack = \left\{ \lambda:g\left( \lambda \middle| w,x \right) \geq c \right\},
\]

for some \(c = c(w,x)\).

\section{New IM-based confidence intervals}

In this section, we first briefly introduce the IM framework aimed at
establishing valid probabilistic inference and then propose new IM-based
CIs for the parameter of interest \(\lambda\).

\subsection{IM framework for valid probabilistic inference}

If \(X\), denote the observable sample data. The sampling model is then
defined as a probability distribution \(P_{X|\theta}\) over the sample
space \(\mathbf{X}\), with \(\theta \in \mathbf{\Theta}\) serving as the
indexing parameter. For a given \(\theta\), this sampling model for
\(X\) is induced via an auxiliary variable \(U\). Specifically, let
\(\mathbf{U}\) be an auxiliary space endowed with a probability measure
\(P_{U}\); the sample model \(P_{X|\theta}\) is determined via the
following procedure:

sample \(U\sim P_{U}\) and set \(X = \varphi(\theta,U)\),

where \(\varphi:\mathbf{\Theta \times U \rightarrow X}\) denotes an
appropriate mapping. Furthermore, if \(X = x\) is observed, then we know
that \(x = \varphi(\theta,u^{*})\), where \(u^{*}\) is some unobserved
realization of \(U\). Clearly, knowing \(u^{*}\) is equivalent to
knowing \(\theta\). Hence, the IM approach attempts to accurately
predict the value of \(u^{*}\) before conditioning on \(X = x\).

\emph{\textbf{Definition 1} \citep{Martin2013}}. A predictive random
set \(S\) is valid for predicting the unobserved auxiliary variable if
\(Q_{S}(U) = P_{S}\left\{ U \notin S \right\}\), as a function of
\({U\sim P}_{U}\), is stochastically no larger than \(Unif(0,1)\), that
is, for each \(\alpha \in (0,1)\),
\(P_{U}\left\{ Q_{S}(U) \geq 1 - \alpha \right\} \leq \alpha\). If $\leq \alpha$ 
can be replaced by $= \alpha$, then $S$ is efficient.

The IM consists of three steps:

\textbf{\emph{Association}-step:} From an appropriate mapping
\(\varphi:X = \varphi(\theta,U)\), the IM establishes an association
between the parameter \(\theta\) and each possible pair \((X,U)\). This
association yields a collection of candidate value sets defined as
\(\Theta_{X}(U) = \{\theta:X = \varphi(\theta,U)\}\).

\textbf{\emph{Prediction}-step:} Given observed data \(X = x\), let
\(\theta^{*}\) denote the true value of \(\theta\). There exists a
corresponding \(u^{*}\) such that \(x = \varphi(\theta^{*},u^{*})\).
Furthermore, a valid predictive random set \(S(u)\) is employed to
predict this true \(u^{*}\). The validity condition ensures that
\(S(u)\) will hit \(u^{*}\) with a large probability.

\textbf{\emph{Combination}-step:} The association step and prediction
step are combined to obtain a final random set of \(\theta\), that is,
\(\Theta_{x}\left( S(u) \right) = \bigcup_{u \in S(u)}^{}{\Theta_{x}(u)}\).
Then, for any assertion \(A\) regarding the parameter of interest
\(\theta\), two probability measures are computed to quantify the
evidence in \(x\) supporting \(A\): the belief function
\({bel}_{x}(A) = P_{S(u)}\{\Theta_{x}\left( S(u) \right) \subseteq A\}\)
and the plausibility function
\({pl}_{x}(A) = P_{S(u)}\{\Theta_{x}\left( S(u) \right) \nsubseteq A^{c}\}\).

Notably, \({bel}_{x}(A)\) and \({pl}_{x}(A)\) represent the minimum and
maximum probabilities supporting the truth of assertion \(A\),
respectively. In practical applications, reporting the plausibility
function is more convenient, as it can be easily utilized to construct
frequentist statistical procedures. To test assertion
\(A = \{\theta:\theta = \theta_{0}\}\), the null hypothesis
\(H_{0}:\theta = \theta_{0}\) is rejected at a significance level
\(\alpha\) if \({pl}_{x}(A) \leq \alpha\). Additionally, this
plausibility function yields a two-sided \(1 - \alpha\) IM CI
\(\{\theta:{pl}_{x}(A) > \alpha\}\).

\emph{\textbf{Definition 2} \citep{Martin2013}}. Suppose that
 $X \sim P_{X|\theta}$ and let $A$ be an assertion of interest. 
The IM with the plausibility function $\mathrm{pl}_X(A)$ is 
valid for assertion $A$ if, for each $\alpha \in (0,1)$,
\[
\sup_{\theta \in A} P_{X|\theta} \left\{ \mathrm{pl}_X(A) \le \alpha \right\} \le \alpha .
\]
The IM is valid if it is valid for all $A$.

\begin{theorem}
\label{thm:1}
Suppose that the predictive random set $S$ is valid and
 that $\Theta_x(S) \neq \varnothing$ with $P_S$-probability 1 for all $x$. 
Then the IM confidence interval $\{\theta : \mathrm{pl}_x(A) > \alpha\}$
can guarantee the nominal coverage probability. Moreover, 
if ``$\le \alpha$'' can be replaced by ``$= \alpha$'', then the IM interval controls 
the coverage probability exactly at the confidence level $1-\alpha$.

\end{theorem}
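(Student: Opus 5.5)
The plan is to reduce the coverage statement to the validity property in Definition 2 for singleton assertions, and to derive that from the validity of the predictive random set in Definition 1. Fix a true value $\theta$ and write $A=\{\theta\}$. The interval $\{\vartheta:\mathrm{pl}_X(\{\vartheta\})>\alpha\}$ fails to cover $\theta$ exactly when $\mathrm{pl}_X(\{\theta\})\le\alpha$. So the coverage probability equals
\[
1-P_{X|\theta}\{\mathrm{pl}_X(\{\theta\})\le\alpha\},
\]
and it suffices to show $P_{X|\theta}\{\mathrm{pl}_X(\{\theta\})\le\alpha\}\le\alpha$, with equality in the efficient case.

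The key step is to bound the plausibility from below by a function of the true auxiliary variable. Write $X=\varphi(\theta,U)$ with $U\sim P_U$. Then $\theta\in\Theta_X(U)$. Whenever the random set $S$ contains $U$, we have $\Theta_X(U)\subseteq\Theta_X(S)$, so $\theta\in\Theta_X(S)$, i.e. $\Theta_X(S)\not\subseteq\{\theta\}^c$. The assumption that $\Theta_x(S)\neq\varnothing$ with $P_S$-probability one is used here so that the plausibility function is well defined without a normalizing (conditioning) step, which would otherwise distort the comparison. Hence
\[
\mathrm{pl}_X(\{\theta\})\ \ge\ P_S\{S\ni U\}=1-Q_S(U).
\]
This yields the event inclusion $\{\mathrm{pl}_X(\{\theta\})\le\alpha\}\subseteq\{Q_S(U)\ge 1-\alpha\}$. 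Taking $P_U$-probabilities and invoking Definition 1 gives a bound of at most $\alpha$. Since $\theta$ is arbitrary, the supremum over $\theta$ is also at most $\alpha$, so the IM is valid for singletons and the interval has coverage at least $1-\alpha$.

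For the exactness claim, we need the event inclusion above to become an equality up to null sets, together with $P_U\{Q_S(U)\ge1-\alpha\}=\alpha$. I would argue as follows. For the association maps used in this paper, $u\mapsto\varphi(\theta,u)$ is one-to-one in the relevant sense, i.e. $\Theta_x(u)$ is a singleton and distinct $u$ give distinct $\theta$ for fixed $x$. Then $\theta\in\Theta_X(S)$ holds iff $U\in S$, so $\mathrm{pl}_X(\{\theta\})=1-Q_S(U)$ exactly. Consequently
\[
P_{X|\theta}\{\mathrm{pl}_X(\{\theta\})\le\alpha\}=P_U\{Q_S(U)\ge1-\alpha\}=\alpha,
\]
and the coverage is exactly $1-\alpha$.

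The main obstacle is this last step. The statement as written does not explicitly impose the one-to-one association needed to turn the inequality $\mathrm{pl}\ge 1-Q_S(U)$ into an equality. I would handle this by stating that assumption explicitly, or by noting it holds for the continuous normal association. Discrete models, such as the Poisson case, can only achieve the inequality, which is consistent with the paper's later use of random weighting there.
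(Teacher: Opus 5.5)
Your proof is correct. For the coverage guarantee it follows the same route as the paper. The paper reduces coverage to IM validity and cites Theorem 2 of \cite{Martin2013}. Your bound $\mathrm{pl}_X(\{\theta\})\ge P_S\{S\ni U\}=1-Q_S(U)$, followed by Definition 1, is exactly the argument behind that cited theorem, written out for singleton assertions.

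The real difference is the ``Moreover'' clause, which the paper's proof does not address. You are right that efficiency of $S$ alone does not give exact coverage. One also needs $\theta\in\Theta_X(S)\iff U\in S$, for example a single-valued, injective association. The paper's own Poisson IM shows this condition cannot be dropped: it uses the same efficient $S$ with a set-valued association, and its intervals are conservative. If ``$=\alpha$'' is instead read as equality in Definition 2, exactness follows immediately from your first display and needs no extra hypothesis.

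Two small corrections. First, your inequality never uses the nonemptiness hypothesis, since $U\in S$ already forces $\theta\in\Theta_X(S)$. Normalizing by $P_S\{\Theta_X(S)\neq\varnothing\}$ could only increase $\mathrm{pl}$, so it would not break the validity bound either. The hypothesis is essentially definitional: it keeps $\mathrm{bel}\le\mathrm{pl}$ without normalization. Second, appealing to the normal association to discharge the exactness condition needs care. In the constrained case, replacing an empty intersection by $\{0\}$ destroys your ``iff'' at $\theta=0$. This is why the paper's proof of Theorem~\ref{thm:2} obtains $\alpha/2$ there, giving coverage $1-\alpha/2$ at the boundary rather than exactly $1-\alpha$.
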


\subsection{Inference for the constrained normal case}

Suppose that independent random variables \(X\sim N(\theta,\sigma^{2})\)
and \(W\sim\sigma^{2}\chi_{r}^{2}\) are observed, where
\(\theta \geq 0\) and \(\sigma > 0\) are unknown parameters and \(r\) is
a positive integer. Let \(Z\sim N(0,1)\), \(U\sim\chi_{r}^{2}\), where
\(Z\) and \(U\) are independent unobservable but predictable auxiliary
variables. The association linking \(X\), \(\theta\), and an auxiliary
variable \(Z\) may be written as

\begin{equation}\label{eq:1}
X = \theta + \sigma Z,\ Z\sim N(0,1).
\end{equation}

Similarly, the association for \(W\), given \(\sigma^{2}\), may be
written as

\begin{equation}\label{eq:2}
W = \sigma^{2}U,U\sim\chi_{r}^{2}.
\end{equation}

From (\ref{eq:2}), \(\sigma^{2} = W/U\); we then plug \(\sigma^{2} = W/U\) into
(\ref{eq:1}), that is,

\begin{equation}\label{eq:3}
X = \theta + \sqrt{W}\frac{Z}{\sqrt{U}},\ Z\sim N(0,1),U\sim\chi_{r}^{2}\ .
\end{equation}

Hence, the IM method has the following three-step procedures:

\emph{\textbf{A-step:}} Let \(F( \cdot )\) be the cdf of
\(Z/\sqrt{U}\). From (\ref{eq:3}), The association step of the IM produces an initial
candidate collection of \(\theta\),

\begin{equation}\label{eq:4}
\Theta_{X,W}(V) = \left\{ \theta:X = \theta + \sqrt{W}F^{- 1}(V) \right\},\ V\sim Unif(0,1).
\end{equation}

\emph{\textbf{P-step:}} We are interested in two-sided CIs, and the
construction methods for one-sided CIs are analogous. Hence, for a
singleton assertion \(A = \{\theta:\theta = \theta_{0}\}\), the valid
predictive random set (PRS) of \(V\) proposed by \cite{Martin2013}
is given by

\[
S = \left\{ V:|V - 0.5| \leq \left| \widetilde{V} - 0.5 \right| \right\},\ \widetilde{V}\sim Unif(0,1).
\]

\textbf{\emph{C-step}:} Combine \(\Theta_{X,W}(V)\) and \(S\) to obtain
a new candidate collection of \(\theta\), that is,

\[
\Theta_{X,W}(S) = \bigcup_{V \in S}^{}{\Theta_{X,W}(V)} = \left\lbrack X - \sqrt{W}F^{- 1}\left( 0.5 + \left| \widetilde{V} - 0.5 \right| \right),X - \sqrt{W}F^{- 1}\left( 0.5 - \left| \widetilde{V} - 0.5 \right| \right) \right\rbrack.
\]

Since \(\theta\) has a constraint
\(\theta \in C = \{\theta:\theta \geq 0\}\), conflict situation
\(X - \sqrt{W}F^{- 1}\left( 0.5 + \left| \widetilde{V} - 0.5 \right| \right) < 0\)
exists. To avoid the conflict case
\(\Theta_{X,W}(S)\bigcap_{}^{}C = \varnothing\), one possible way is to
enlarge the PRS \(\Theta_{X,W}(S)\) for \(\theta\) as follows:

\[
\Theta_{X,W}'(S) = \begin{cases}
\Theta_{X,W}(S) \cap C, & \text{if } \Theta_{X,W}(S) \cap C \neq \varnothing; \\
\left\{ 0 \right\},\ \ & \text{if } \Theta_{X,W}(S) \cap C = \varnothing.
\end{cases}
\]

Hence, the closed form of the final PRS \(\Theta_{X,W}'(S)\) is

\[
\Theta_{X,W}'(S) = \left\lbrack \max\left\{ 0,X - \sqrt{W}F^{- 1}\left( 0.5 + \left| \widetilde{V} - 0.5 \right| \right) \right\},\max\left\{ 0,X - \sqrt{W}F^{- 1}\left( 0.5 - \left| \widetilde{V} - 0.5 \right| \right) \right\} \right\rbrack.
\]

For an assertion \(A = \{\theta:\theta = \theta_{0}\}\), we calculate
the plausibility function when \(\theta_{0} = 0\),

\[
{pl}_{X,W}(A) = P_{S}{\left\{ \Theta_{X,W}'(S) \cap \{ 0\} \neq \varnothing \right\} = \begin{cases}
2\left( 1 - F\left( \frac{X}{\sqrt{W}} \right) \right), \\
1,\
\end{cases} }\begin{matrix}
\frac{1}{2} < F\left( \frac{X}{\sqrt{W}} \right) \leq 1; \\
otherwise,
\end{matrix}
\]

and for \(\theta_{0} > 0\),

\[
{pl}_{X,W}(A) = P_{S}\left\{ \Theta_{X,W}'(S) \cap A \neq \varnothing \right\} = \begin{cases}
2F\left( \frac{X - \theta_{0}}{\sqrt{W}} \right), \\
2\left( 1 - F\left( \frac{X - \theta_{0}}{\sqrt{W}} \right) \right),\
\end{cases} \begin{matrix}
F\left( \frac{X - \theta_{0}}{\sqrt{W}} \right) < \frac{1}{2}; \\
F\left( \frac{X - \theta_{0}}{\sqrt{W}} \right) \geq \frac{1}{2}.
\end{matrix}
\]

Given the significance level \(\alpha\), we confirm that the assertion
\(A = \{\theta:\theta = \theta_{0}\}\) is wrong if
\({pl}_{X,W}(A) \leq \alpha\). Moreover, the IM two-sided \(1 - \alpha\)
CI for \(\theta\) is

\[
\left\{ \theta:{pl}_{X,W}(A) > \alpha \right\} = (\max\left\{ 0,X - \sqrt{W}F^{- 1}\left( 1 - \frac{\alpha}{2} \right) \right\},\max\left\{ 0,X - \sqrt{W}F^{- 1}\left( \frac{\alpha}{2} \right) \right\}).
\]

\begin{theorem}
\label{thm:2}
The plausibility function \({pl}_{X,W}(A)\)
of the IM method is valid for assertion
\(A = \{\theta:\theta = \theta_{0}\}\) if, for each
\(\alpha \in (0,1)\),

\[
\sup_{\theta \in A}{P_{(X,W)|\theta,\sigma^{2}}\left\{ {pl}_{X,W}(A) \leq \alpha \right\}} \leq \alpha.
\]

Hence, the resulting IM CI can guarantee the nominal coverage
probability \(1 - \alpha\).
\end{theorem}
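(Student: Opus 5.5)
The plan is to reduce validity to a statement about the distribution of a single pivot, and then invoke Theorem~\ref{thm:1} for the coverage claim.

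\textbf{Pivot.} Fix $\theta_0 \geq 0$ and $\sigma>0$, and let $(X,W)$ be generated under $(\theta_0,\sigma^2)$. By (\ref{eq:3}),
\[
T := \frac{X-\theta_0}{\sqrt{W}} = \frac{\sigma Z}{\sigma\sqrt{U}} = \frac{Z}{\sqrt{U}} .
\]
Its law is $F$ whatever the value of $\sigma^2$. Since $F$ is continuous, $V^*:=F(T)\sim Unif(0,1)$ exactly. This pivotal property is what removes the nuisance parameter.

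\textbf{Case $\theta_0>0$.} The displayed formula gives $\mathrm{pl}_{X,W}(A) = 2\min\{V^*,1-V^*\} = 1-|2V^*-1|$. If $V^*$ is uniform, then $|2V^*-1|$ is uniform, so $\mathrm{pl}_{X,W}(A)\sim Unif(0,1)$. Hence $P_{(X,W)|\theta_0,\sigma^2}\{\mathrm{pl}_{X,W}(A)\le\alpha\}=\alpha$ for every $\sigma$. This is the same computation that shows the default PRS $S$ is valid, and in fact efficient, in the sense of Definition~1: $Q_S(v)=1-|2v-1|$.

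\textbf{Case $\theta_0=0$.} Here $\mathrm{pl}_{X,W}(A)=2(1-V^*)$ when $V^*>1/2$ and equals $1$ otherwise. For $\alpha<1$,
\[
\{\mathrm{pl}\le\alpha\}=\{V^*\ge 1-\alpha/2\},
\]
which has probability $\alpha/2\le\alpha$. Taking the supremum over $\sigma$ (and over $\theta\in A$, which is the single point $\theta_0$) gives the displayed inequality in both cases.

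\textbf{Coverage.} We have $\theta_0\notin\{\theta:\mathrm{pl}_{X,W}(\{\theta\})>\alpha\}$ iff $\mathrm{pl}_{X,W}(\{\theta_0\})\le\alpha$, so the non-coverage probability is at most $\alpha$ for all $(\theta_0,\sigma)$. Alternatively, Theorem~\ref{thm:1} applies directly: $S$ is valid, and the enlarged set $\Theta'_{X,W}(S)$ is nonempty by construction (it is replaced by $\{0\}$ in the conflict case). For $\theta_0>0$ the bound is attained exactly.

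\textbf{Main obstacle.} The step needing care is checking that the enlargement $\Theta'_{X,W}(S)$ does not break the correspondence between the plausibility formulas and events in $V^*$, particularly at the boundary $\theta_0=0$. I would verify directly that for $\theta_0>0$ intersecting with $C$ does not change whether $\theta_0$ is hit, so the unconstrained computation applies. For $\theta_0=0$, enlarging can only increase plausibility, which makes the bound conservative rather than violating it.
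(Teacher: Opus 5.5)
Your proposal is correct and follows essentially the same route as the paper. Both arguments use the $\sigma$-free pivot $F((X-\theta_0)/\sqrt{W})\sim Unif(0,1)$, split into the cases $\theta_0=0$ (rejection probability $\alpha/2$) and $\theta_0>0$ (rejection probability exactly $\alpha$), and then invoke Theorem~\ref{thm:1} for coverage. Your explicit check that intersecting with $C$ does not change whether a positive $\theta_0$ is hit, and your direct duality argument for coverage, fill in details the paper leaves implicit.
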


\subsection{Inference in the Poisson case}

Suppose that we observe \(X = B + S\) and \(W\), where \(B\), \(S\), and
\(W\) are independent random variables for which
\(B\sim Poisson(\varepsilon)\), \(S\sim Poisson(\lambda)\), and
\(W\sim Poisson(m\varepsilon)\), where \(\lambda,\varepsilon \geq 0\)
are unknown parameters and \(m > 0\) is known. Let
\(\theta = \varepsilon + \lambda\); the goal is to infer \(\lambda\),
where \(\lambda \in C = \{\lambda:\lambda \geq 0\}\). In this section,
we propose two new IM-based CIs for \(\lambda\).

\subsubsection{IM confidence interval}

Let \(F_{\theta}( \cdot )\) and \(F_{m\varepsilon}( \cdot )\) be the
cdfs of \(X\sim Poisson(\theta)\) and \(W\sim Poisson(m\varepsilon)\),
respectively. Owing to the discreteness of the Poisson distribution,
\cite{Leaf2012} gave an association linking \(X\),
\(\theta\), and an auxiliary variable \(U\sim P_{U}\) as follows:

\begin{equation}\label{eq:5}
F_{\theta}(X - 1) \leq U \leq F_{\theta}(X),\ U\sim Unif(0,1).
\end{equation}

Similarly, the association for \(W\), given \(m\varepsilon\), may be
written as

\begin{equation}\label{eq:6}
F_{m\varepsilon}(W - 1) \leq \widetilde{U} \leq F_{m\varepsilon}(W),\widetilde{U}\sim Unif(0,1),
\end{equation}

where \(U\) and \(\widetilde{U}\) are mutually independent. Let
\(G_{a,b}( \cdot )\) be the cdf of the Gamma distribution with shape
\emph{a} and scale \emph{b} and let \(G_{a,b}^{- 1}( \cdot )\) be the
inverse function of \(G_{a,b}( \cdot )\); then, from the well-known
relation between the Poisson and Gamma distributions,
\(G_{X,1}(\theta) = 1 - F_{\theta}(X - 1)\), we can rewrite the joint
associations (\ref{eq:5}) and (\ref{eq:6}) as follows:

\begin{equation}\label{eq:7}
G_{X,1}^{- 1}(1 - U) \leq \theta \leq G_{X + 1,1}^{- 1}(1 - U),U\sim Unif(0,1),
\end{equation}

\begin{equation}\label{eq:8}
G_{W,1}^{- 1}\left( 1 - \widetilde{U} \right) \leq m\varepsilon \leq G_{W + 1,1}^{- 1}\left( 1 - \widetilde{U} \right),\widetilde{U}\sim Unif(0,1),
\end{equation}

Hence, by (\ref{eq:5}) and (\ref{eq:6}), the initial association for \(\lambda = \theta - \varepsilon\) is

\begin{equation}\label{eq:9}
G_{X,1}^{- 1}(1 - U) - \frac{G_{W + 1,1}^{- 1}\left( 1 - \widetilde{U} \right)}{m} \leq \lambda{\leq G}_{X + 1,1}^{- 1}(1 - U) - \frac{G_{W,1}^{- 1}\left( 1 - \widetilde{U} \right)}{m}.
\end{equation}

The IM method has the following three-step procedures:

\emph{\textbf{A-step:}} Let \(K_{X,W + 1}( \cdot )\) and
\(K_{X + 1,W}( \cdot )\) be the distribution functions of the two
endpoints in (\ref{eq:9}). The association step of the IM produces an initial
candidate collection of \(\lambda\)

\begin{equation}\label{eq:10}
\Theta_{X,W}(V) = \left\{ \lambda:K_{X,W + 1}^{- 1}(V) \leq \lambda \leq K_{X + 1,W}^{- 1}(V) \right\},\ V\sim Unif(0,1).
\end{equation}

\emph{\textbf{P-step:}} For a singleton assertion
\(A = \{\lambda:\lambda = \lambda_{0}\}\), the valid PRS of \(V\)
proposed by \cite{Martin2013} is given by

\[
S = \left\{ V:|V - 0.5| \leq \left| \widetilde{V} - 0.5 \right| \right\},\ \widetilde{V}\sim Unif(0,1),\widetilde{V}\sim Unif(0,1).
\]

\textbf{\emph{C-step}:} \(\Theta_{X,W}(V)\) and \(S\) are combined to obtain
a random set \(\Theta_{X,W}(S) = \bigcup_{V \in S}^{}{\Theta_{X,W}(V)}\)
for \(\lambda\). Note that \(\lambda\) has a constraint
\(\lambda \in C = \{\lambda:\lambda \geq 0\}\). Similar to the
constrained normal case, to avoid the conflict case
\(\Theta_{X,W}(S)\bigcap_{}^{}C = \varnothing\), we choose to enlarge
the PRS \(\Theta_{X,W}(S)\) for \(\lambda\) as follows:

\[
\Theta_{X,W}'(S) = \begin{cases}
\Theta_{X,W}(S) \cap C, & if\ \Theta_{X,W}(S) \cap C \neq \varnothing; \\
\left\{ 0 \right\}, & if\ \Theta_{X,W}(S) \cap C = \varnothing.
\end{cases}
\]

Hence, the closed form of the final PRS \(\Theta_{X,W}'(S)\) is

\[
\Theta_{X,W}'(S) = \left\lbrack \max\left\{ 0,K_{X,W + 1}^{- 1}\left( 0.5 - \left| \widetilde{V} - 0.5 \right| \right) \right\},\max\left\{ 0,K_{X + 1,W}^{- 1}\left( 0.5 + \left| \widetilde{V} - 0.5 \right| \right) \right\} \right\rbrack.
\]

Moreover, we calculate the plausibility function for an assertion
\(A = \{\lambda:\lambda = \lambda_{0}\}\) as follows:

\[
{pl}_{X,W}(A) = P_{S}\left\{ \Theta_{X,W}'(S) \cap \left\{ 0 \right\} \neq \varnothing \right\} = \begin{cases}
2K_{X,W + 1}(0), & K_{X,W + 1}(0) < \frac{1}{2}; \\
1, & otherwise,
\end{cases}
\]

when \(\lambda_{0} = 0\), and

\[
{pl}_{X,W}(A) = P_{S}\left\{ \Theta_{X,W}'(S) \cap A \neq \varnothing \right\} = \begin{cases}
2K_{X,W + 1}\left( \lambda_{0} \right), & K_{X,W + 1}\left( \lambda_{0} \right) < \frac{1}{2}; \\
{2(1 - K}_{X + 1,W}(\lambda_{0})), & K_{X + 1,W}\left( \lambda_{0} \right) > \frac{1}{2}; \\
1, & otherwise,
\end{cases}
\]

when \(\lambda_{0} > 0\).

\begin{theorem}
\label{thm:3}
According to Theorem~\ref{thm:1}, the plausibility
function \({pl}_{X,W}(A)\) of the IM method is valid for assertion \(A\)
if, for each \(\alpha \in (0,1)\),

\[
\sup_{\lambda \in A}{P_{(X,W)|\lambda,\varepsilon}\left\{ {pl}_{X,W}(A) \leq \alpha \right\}} \leq \alpha.
\]

\end{theorem}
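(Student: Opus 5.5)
The plan is to reduce Theorem~\ref{thm:3} to Theorem~\ref{thm:1}. I will verify its two hypotheses for the Poisson construction, namely validity of the predictive random set and non-emptiness of the enlarged random set. Then I will translate the event $\{pl_{X,W}(A)\le\alpha\}$ into an event about an auxiliary variable whose distribution is stochastically controlled.

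\textbf{Step 1: validity of $S$.} For $S=\{V:|V-0.5|\le|\widetilde V-0.5|\}$ with $\widetilde V\sim Unif(0,1)$, compute
\[
Q_S(v)=P_S\{v\notin S\}=P\{|\widetilde V-0.5|<|v-0.5|\}=|2v-1|.
\]
If $V\sim Unif(0,1)$, then $|2V-1|\sim Unif(0,1)$. Hence $P_U\{Q_S(V)\ge 1-\alpha\}=\alpha$, so $S$ is valid, and indeed efficient in the sense of Definition~1.

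\textbf{Step 2: non-emptiness.} By construction, $\Theta'_{X,W}(S)$ is replaced by $\{0\}$ whenever $\Theta_{X,W}(S)\cap C=\varnothing$. It is therefore nonempty with $P_S$-probability one for every $(x,w)$, so the second hypothesis of Theorem~\ref{thm:1} holds.

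\textbf{Step 3: the direct bound.} Fix $\lambda_0\ge0$ and $\varepsilon\ge0$, and take $\alpha<1$. From the closed-form plausibility, $pl_{X,W}(A)\le\alpha$ forces one of two events:
\[
K_{X,W+1}(\lambda_0)\le\alpha/2 \qquad\text{or}\qquad 1-K_{X+1,W}(\lambda_0)\le\alpha/2.
\]
For $\lambda_0=0$ only the first event is possible. The key claim is that, under the true $(\lambda_0,\varepsilon)$, the random variable $K_{X,W+1}(\lambda_0)$ is stochastically no smaller than $Unif(0,1)$, and $K_{X+1,W}(\lambda_0)$ is stochastically no larger than $Unif(0,1)$.

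I would derive this claim from associations (\ref{eq:7})--(\ref{eq:9}). Write $L=G^{-1}_{X,1}(1-U)-G^{-1}_{W+1,1}(1-\widetilde U)/m$ and $R=G^{-1}_{X+1,1}(1-U)-G^{-1}_{W,1}(1-\widetilde U)/m$, where $U,\widetilde U$ are the true auxiliary values. These satisfy $L\le\lambda_0\le R$ with probability one. Conditioning on the data, the fresh draws $(U,\widetilde U)$ give $K_{X,W+1}(\lambda_0)=P(L\le\lambda_0)$, and the nesting $L\le\lambda_0\le R$ transfers to the sampling distribution through the standard Poisson--Gamma stochastic ordering. Specifically, for a single Poisson count, $1-F_\theta(X-1)$ is stochastically $\ge Unif(0,1)$ and $F_\theta(X)$ is stochastically $\ge Unif(0,1)$. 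These are the usual discrete probability-integral-transform inequalities, and they must be combined across the independent pair $(X,W)$.

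Given the claim, the result follows:
\[
P\{pl\le\alpha\}\le P\{K_{X,W+1}(\lambda_0)\le\alpha/2\}+P\{1-K_{X+1,W}(\lambda_0)\le\alpha/2\}\le\alpha/2+\alpha/2=\alpha.
\]
Taking the supremum over $\varepsilon$ proves validity, and coverage then follows from Theorem~\ref{thm:1}.

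\textbf{Main obstacle.} The difficult step is the stochastic-ordering claim for the two-sample difference. $K$ is defined as the distribution of a difference of Gamma quantiles evaluated at the observed counts, and the nuisance $\varepsilon$ is unknown. I would first prove monotonicity of $K_{x,w+1}(\lambda_0)$ in $x$ (nonincreasing) and in $w$ (nondecreasing). I would then couple $X=F_\theta^{-1}(U)$ and $W=F_{m\varepsilon}^{-1}(\widetilde U)$ so that the bound reduces to the continuous case, where $P(L\le\lambda_0)$ evaluated at continuous surrogates is exactly uniform. If this coupling fails, the fallback is to show directly that $\{K_{X,W+1}(\lambda_0)\le\alpha/2\}$ is contained in an event of the form $\{(U,\widetilde U)\in D\}$ with $P(D)\le\alpha/2$.
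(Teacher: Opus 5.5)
\paragraph{Verdict: genuine gap.} Your reduction is the same as the paper's: split $\{pl_{X,W}(A)\le\alpha\}$ into the two tail events and bound each by $\alpha/2$. The orientation you require is also the one actually needed: $K_{X,W+1}(\lambda_0)$ stochastically no smaller than $Unif(0,1)$, and $K_{X+1,W}(\lambda_0)$ stochastically no larger. But that stochastic-ordering claim is the whole theorem, and you do not prove it.

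\paragraph{Steps 1--2 do not help.} Theorem~\ref{thm:1} presupposes that the auxiliary variable in the association is the one that generated the data. In (\ref{eq:10}), however, $K_{x,w+1}$ is a data-dependent distribution function obtained by integrating out a \emph{fresh} copy of $(U,\widetilde U)$, so the $V$ there is not such a variable. For the same reason, the nesting $L\le\lambda_0\le R$ holds for the true pair $(U^*,\widetilde U^*)$, not for the fresh pair that defines $K$, and nothing transfers for free. Combining the two one-count inequalities in the natural way gives only
$K_{X,W+1}(\lambda_0)\ge G_{X,1}(\theta)F_{m\varepsilon}(W)\ge(1-U^*)\widetilde U^*$.
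The right-hand side is a product of independent uniforms, and $P\{(1-U^*)\widetilde U^*\le t\}=t(1-\ln t)>t$.

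\paragraph{The proposed fix fails.} You suggest coupling to a continuous surrogate in which the fiducial cdf at the truth is ``exactly uniform.'' That premise is false in general for a difference of two parameters (the Behrens--Fisher phenomenon). Take $Y_j=\theta_jE_j$ with $E_1,E_2$ i.i.d.\ standard exponential, and pivots $\theta_j=Y_j/E_j$. At $(\theta_1,\theta_2)=(2,1)$, the fiducial cdf of $\theta_1-\theta_2$ evaluated at the truth has expectation $\tfrac12\ln 3\approx0.549$, so it is not uniform.

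\paragraph{The paper does not supply the missing step.} Its proof simply asserts from (\ref{eq:10}) a uniform $V$ with $K_{X,W+1}(\lambda)\le V\le K_{X+1,W}(\lambda)$. That is, it treats (\ref{eq:10}) as a genuine association. It also prints the sandwich in the reverse of the orientation its bound uses, since $L\le R$ pointwise forces $K_{X+1,W}\le K_{X,W+1}$.

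\paragraph{What can be proved.} The case $\lambda_0=0$ holds exactly. By the Gamma--Beta--binomial identity, $K_{x,w+1}(0)=P\{\mathrm{Bin}(x+w,1/(m+1))\ge x\}$. Under $\lambda=0$ one has $X\mid X+W=n\sim\mathrm{Bin}(n,1/(m+1))$. Hence $K_{X,W+1}(0)$ is the exact conditional upper-tail p-value and is stochastically no smaller than $Unif(0,1)$. For $\lambda_0>0$ the conditional success probability becomes $(\lambda_0+\varepsilon)/(\lambda_0+(1+m)\varepsilon)$, which still depends on $\varepsilon$, so this argument does not extend. Handling $\lambda_0>0$, or weakening the claim to approximate validity, needs an idea that neither your sketch nor the paper supplies.
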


For any \(\alpha \in (0,1)\), if \({pl}_{X,W}(A) \leq \alpha\), then the
assertion \(A\) is wrong. Moreover, this plausibility function yields a
two-sided IM \(1 - \alpha\) CI

\[
{CI}_{IM} = \left\{ \lambda:{pl}_{X,W}(A) > \alpha \right\} = (\max\left\{ 0,K_{X,W + 1}^{- 1}\left( \frac{\alpha}{2} \right) \right\},\max\left\{ 0,K_{X + 1,W}^{- 1}\left( 1 - \frac{\alpha}{2} \right) \right\}).
\]

Notably, deriving the closed-form expressions of
\(K_{X,W + 1}( \cdot )\) and \(K_{X + 1,W}( \cdot )\) is
challenging; thus, we recommend approximating them via a Monte Carlo
method, as detailed in Algorithm ~\ref{alg:IM_CI} below:
 
\begin{algorithm}[H]
    \setstretch{1.2} 
    \caption{Calculating the IM confidence interval $CI_{IM} = [\lambda_{L}, \lambda_{U}]$}
    \label{alg:IM_CI}
    
    \KwIn{Confidence level $1 - \alpha$, scale parameter $m > 0$, data $X$ and $W$, number of simulations $N$}
    \KwOut{IM interval $[\lambda_{L}, \lambda_{U}]$}
    
    \BlankLine
    \text{Randomly sample $U$ and $\widetilde{U}$ independently $N$ times from $Unif(0,1)$}\;
    
    \For{$i \leftarrow 1$ \KwTo $N$}{
        Calculate: \\
        $\lambda_{1,i} = \max\left\{ 0, G_{X,1}^{-1}(1 - U_i) - \frac{G_{W+1,1}^{-1}(1 - \widetilde{U}_i)}{m} \right\}$\;
        $\lambda_{2,i} = \max\left\{ 0, G_{X+1,1}^{-1}(1 - U_i) - \frac{G_{W,1}^{-1}(1 - \widetilde{U}_i)}{m} \right\}$\;
    }
    
    Approximate $\lambda_{L}$ as the $\alpha/2$ quantile of the $N$ realizations $\{\lambda_{1,i}\}_{i=1}^N$\;
    Approximate $\lambda_{U}$ as the $1 - \alpha/2$ quantile of the $N$ realizations $\{\lambda_{2,i}\}_{i=1}^N$\;
    
    \Return $[\lambda_{L}, \lambda_{U}]$\;
\end{algorithm}

\subsubsection{Nonrandomized IM confidence interval using a random weighting approach}

In general, the association model (\ref{eq:10}) of the IM is an interval, which
may result in a conservative CI. Certain adjustments are required to
address this discreteness. Inspired by the random weighting idea \citep{Lu2019}, we consider a nonrandomized IM (NIM) approach to modify
inequalities (\ref{eq:5}-\ref{eq:6}) to accurate equations so that we can improve the
accuracy of the candidate value of \(\lambda\). Specifically, let
\(F_{\theta}( \cdot )\) be the cdf of the Poisson distribution with
parameter \(\theta\); for any fixed \(\omega \in \lbrack 0,1\rbrack\)
and a nonnegative integer \(X = x \in \lbrack 0, + \infty)\), let

\[
J_{x,\omega}(\theta) = \begin{cases}
\omega I_{0}(\theta) + (1 - \omega)F_{\theta}(x), & if\ x = 0, \\
\omega F_{\theta}(x - 1) + (1 - \omega)F_{\theta}(x), & if\ x \in (0,\infty),
\end{cases}
\]

where \(I_{D}( \cdot )\) is the indicator function of set \(D\). It is
a strictly decreasing function of \(\theta \in \lbrack 0,\infty)\) and
has a range of {[}0,1{]}. Moreover, given \(x\), for every \(\omega\),
\(u \in Unif(0,1)\), there exists a unique solution

\begin{equation}\label{eq:11}
\theta = J_{x,\omega}^{- 1}(u) = sup\left\{ \theta:J_{x,\omega}(\theta) \geq u \right\}.
\end{equation}

Similarly, for any fixed \(m > 0\),
\(\widetilde{\omega} \in \lbrack 0,1\rbrack\) and a nonnegative integer
\(W = w \in \lbrack 0, + \infty)\), let

\begin{equation}\label{eq:12}
{\widetilde{J}}_{w,\widetilde{\omega}}(\varepsilon) = \begin{cases}
\widetilde{\omega}I_{0}(m\varepsilon) + \left( 1 - \widetilde{\omega} \right)F_{m\varepsilon}(w), & if\ w = 0, \\
\widetilde{\omega}F_{m\varepsilon}(w - 1) + \left( 1 - \widetilde{\omega} \right)F_{m\varepsilon}(w), & if\ w \in (0,\infty).
\end{cases}
\end{equation}

For every \(\widetilde{\omega}\), \(\widetilde{u} \in Unif(0,1)\), there
exists a unique solution

\[
\varepsilon = {\widetilde{J}}_{w,\widetilde{\omega}}^{- 1}\left( \widetilde{u} \right) = sup\{\varepsilon:{\widetilde{J}}_{w,\widetilde{\omega}}(\varepsilon) \geq \widetilde{u}\}.
\]

From (\ref{eq:11}) and (\ref{eq:12}), we have
\(\lambda = \theta - \varepsilon = J_{x,\omega}^{- 1}(u) - {\widetilde{J}}_{w,\widetilde{\omega}}^{- 1}\left( \widetilde{u} \right)\).
Let \(H_{x,w}( \cdot )\) be its distribution function, the NIM CI
has the following three steps:

\emph{\textbf{A'-step:}} The new NIM method produces a new association
for \(\lambda\), that is,

\[
\Theta_{x,w}(v) = \left\{ \lambda:\lambda = H_{x,w}^{- 1}(v) \right\},\ v\sim Unif(0,1).
\]

\emph{\textbf{P'-step:}} For a singleton assertion
\(A = \{\lambda:\lambda = \lambda_{0}\}\), the valid PRS for \(v\) is

\[
S = \left\{ v:|v - 0.5| \leq |V - 0.5| \right\},V\sim Unif(0,1).
\]

\textbf{\emph{C'-step}:} To obtain an initial PRS for \(\lambda\),
\(\Theta_{x,w}(v)\) and \(S\) are combined as

\[
\Theta_{x,w}(S) = \bigcup_{v \in S}^{}{\Theta_{x,w}(v)} = \left\lbrack H_{x,w}^{- 1}\left( 0.5 - |V - 0.5| \right),H_{x,w}^{- 1}\left( 0.5 + |V - 0.5| \right) \right\rbrack.
\]

Since \(\lambda \in C = \{\lambda:\lambda \geq 0\}\), the final PRS for
\(\lambda\) is obtained by enlarging \(\Theta_{x,w}(S)\) to

\[
\Theta_{x,w}'(S) = \left\lbrack {max\{ 0,H}_{x,w}^{- 1}\left( 0.5 - |V - 0.5| \right)\},max\{ 0,H_{x,w}^{- 1}\left( 0.5 + |V - 0.5| \right)\} \right\rbrack.
\]

Then, the plausibility function of an assertion
\(A = \{\lambda:\lambda = \lambda_{0}\}\), is given by

\[
{pl}_{x,w}(A) = P_{S}\left\{ \Theta_{x,w}'(S) \cap \{ 0\} \neq \varnothing \right\} = \begin{cases}
2H_{x,w}(0), & H_{x,w}(0) < \frac{1}{2}; \\
1,\  & otherwise,
\end{cases}
\]

when \(\lambda_{0} = 0\), and

\[
{pl}_{x,w}(A) = P_{S}\left\{ \Theta_{x,w}'(S) \cap A \neq \varnothing \right\} = \begin{cases}
2H_{x,w}\left( \lambda_{0} \right), & H_{x,w}\left( \lambda_{0} \right) < \frac{1}{2}; \\
2\left( 1 - H_{x,w}\left( \lambda_{0} \right) \right), & H_{x,w}\left( \lambda_{0} \right) \geq \frac{1}{2},
\end{cases}
\]

when \(\lambda_{0} > 0\).

\begin{theorem}
\label{thm:4}
Let \(S\sim P_{S}\) be a valid predictive
random set for \(v\sim Unif(0,1)\), that is,
\(P_{S}\left\{ v \in S \right\} \geq_{st}Unif(0,1)\), where
\(" \geq_{st}"\) means ``stochastically no smaller than''. If
\(H_{x,w}(\lambda)\sim Unif(0,1)\) for \((x,w)\sim P_{(x,w)|\lambda}\)
for all \(\lambda\), then the NIM method is valid.
\end{theorem}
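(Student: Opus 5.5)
The plan is to reduce validity of the NIM plausibility function to the standard IM argument behind Theorem~\ref{thm:1}. Write $V^{*}=H_{x,w}(\lambda_0)$ for the data evaluated at the true value $\lambda_0$. By hypothesis, $V^{*}\sim Unif(0,1)$ whenever $(x,w)\sim P_{(x,w)|\lambda_0}$, for every value of the nuisance parameter $\varepsilon$. The first step is to express $pl_{x,w}(A)$ as a function of $V^{*}$ alone.

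\textbf{Case $\lambda_0>0$.} The displayed formula gives $pl_{x,w}(A)=1-|2V^{*}-1| = P_{S}\{V^{*}\in S\}$. This holds because $\Theta'_{x,w}(S)\ni\lambda_0$ exactly when $H^{-1}_{x,w}(0.5-|V-0.5|)\le\lambda_0\le H^{-1}_{x,w}(0.5+|V-0.5|)$, that is, when $|V^{*}-0.5|\le|V-0.5|$. The truncation at $0$ does not affect membership of a positive $\lambda_0$.

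\textbf{Case $\lambda_0=0$.} Here $pl_{x,w}(A)=\min\{1,2V^{*}\}$. This is at least $1-|2V^{*}-1|$, since for $V^{*}<1/2$ the two agree and for $V^{*}\ge 1/2$ the left side equals $1$. Thus the enlargement by $\max\{0,\cdot\}$ can only increase plausibility.

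In both cases, then, $pl_{x,w}(A)\ge Q(V^{*})$, where $Q(u):=P_S\{u\in S\}=1-|2u-1|$.

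The second step is the probability computation. For each $\alpha\in(0,1)$ and each $\varepsilon$,
\[
P_{(x,w)|\lambda_0,\varepsilon}\{pl_{x,w}(A)\le\alpha\}\le P\{Q(V^{*})\le\alpha\}=P\{|2V^{*}-1|\ge1-\alpha\}=\alpha .
\]
The last equality uses $V^{*}\sim Unif(0,1)$. More generally, it only needs the validity assumption $P_S\{v\in S\}\ge_{st}Unif(0,1)$ applied to $Q(V^{*})$, which is exactly Definition 1 transferred from the auxiliary variable to $V^{*}$. Taking the supremum over $\lambda\in A$ and over the nuisance $\varepsilon$ gives Definition 2, so the NIM is valid. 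As in Theorem~\ref{thm:1}, coverage of $\{\lambda:pl_{x,w}(A)>\alpha\}$ follows immediately, since non-coverage of $\lambda_0$ is the event $pl_{x,w}(\{\lambda_0\})\le\alpha$.

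The main obstacle is the first step: making the identification $pl_{x,w}(A)\ge P_S\{H_{x,w}(\lambda_0)\in S\}$ rigorous. This requires care about monotonicity and continuity of $H_{x,w}$ and about the generalized inverse $H^{-1}_{x,w}$. I will use the convention $H^{-1}(v)=\inf\{\lambda:H(\lambda)\ge v\}$ and check that $\lambda_0\in[H^{-1}(a),H^{-1}(b)]$ is implied by $a\le H(\lambda_0)\le b$, possibly up to boundary events of $P_S$-probability zero. Flat pieces and atoms of $H_{x,w}$ (for instance, the point mass at $x=0$ coming from $I_0$) are the delicate cases. If the implication fails on a null set, I will note that this changes $P_S$ only by zero, so the inequality still holds.
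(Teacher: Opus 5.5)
Your proposal is correct and follows essentially the same route as the paper: identify $pl_{x,w}(A)$ with $P_S\{H_{x,w}(\lambda_0)\in S\}$, then combine uniformity of $H_{x,w}(\lambda_0)$ with validity of $S$ to get $P\{pl_{x,w}(A)\le\alpha\}\le\alpha$. Your version is slightly more careful than the paper's. The paper's proof works with the untruncated set $\Theta_{x,w}(S)$ and asserts equality, whereas you show that the truncation at $\lambda_0=0$, and any atoms or flat pieces of $H_{x,w}$ (up to $P_S$-null events), can only increase plausibility, and that one-sided inequality is all validity needs.
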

Given the significance level \(\alpha\), the resulting two-sided
\(1 - \alpha\) NIM CI for \(\lambda\) is

\[
{CI}_{NIM} = \left\{ \lambda:{pl}_{x,w}(A) > \alpha \right\} = \left( \max\left\{ 0,H_{x,w}^{- 1}\left( \frac{\alpha}{2} \right) \right\},\max\left\{ 0,H_{x,w}^{- 1}\left( 1 - \frac{\alpha}{2} \right) \right\} \right).
\]

According to Theorem~\ref{thm:4}, the NIM method is valid if the Monte Carlo
approximation \(H_{x,w}(\lambda)\) exactly follows a uniform
distribution on (0,1). Although this condition is not always satisfied,
the simulation study in Section\,4.2 shows that \(H_{x,w}(\lambda)\)
possesses certain desirable properties, which can provide a theoretical
justification for the superior coverage performance of the NIM
intervals. Moreover, since it is difficult to obtain the closed form of
\(H_{x,w}( \cdot )\), we suggest approximating it via a Monte Carlo
method as follows:

\begin{algorithm}[H]
    \setstretch{1.2} 
    \caption{Calculating the NIM confidence interval $CI_{NIM} = [\lambda_{L}, \lambda_{U}]$}
    \label{alg:NIM_CI}
    
    \KwIn{Confidence level $1-\alpha$, $m > 0$, data $X=x$, $W=w$, number of simulations $N$}
    \KwOut{NIM interval $[\lambda_{L}, \lambda_{U}]$}
    
    \BlankLine
    \For{$i \leftarrow 1$ \KwTo $N$}{
        1. Sample $\omega, \widetilde{\omega}, u, \widetilde{u} \overset{iid}{\sim} Unif(0,1)$\;
        2. Solve for $\theta$ using: \\
        $u = \begin{cases} 
        \omega I_{0}(\theta) + (1 - \omega)F_{\theta}(x), & \text{if } x = 0 \\
        \omega F_{\theta}(x - 1) + (1 - \omega)F_{\theta}(x), & \text{if } x \in (0,\infty)
        \end{cases}$\;
        
        3. Solve for $\varepsilon$ using: \\
        $\widetilde{u} = \begin{cases} 
        \widetilde{\omega}I_{0}(m\varepsilon) + (1 - \widetilde{\omega})F_{m\varepsilon}(w), & \text{if } w = 0 \\
        \widetilde{\omega}F_{m\varepsilon}(w - 1) + (1 - \widetilde{\omega})F_{m\varepsilon}(w), & \text{if } w \in (0,\infty)
        \end{cases}$\;
        
        4. Calculate $\lambda^{(i)} = \max\{0, \theta - \varepsilon\}$\;
    }
    
    Calculate $\lambda_{L}$ as the $\alpha/2$ quantile of the realizations $\{\lambda^{(i)}\}_{i=1}^N$\;
    Calculate $\lambda_{U}$ as the $1 - \alpha/2$ quantile of the realizations $\{\lambda^{(i)}\}_{i=1}^N$\;
    
    \Return $[\lambda_{L}, \lambda_{U}]$\;
\end{algorithm}

\begin{remark}
The solutions of \(\theta\) and
\(\varepsilon\) in Step 1 can be achieved via either the
\textquotesingle uniroot\textquotesingle{} function in R software or via
the bisection method function in Python software. The accuracy of the
approximate NIM solution depends only on the number of repetitions
\emph{N}. We recommend \emph{N} =10,000 in practical applications and
use this value in the following simulation studies to ensure that there
is a greater than 95\% probability of an absolute error of less than
0.01.
\end{remark}
\begin{remark}
When evaluating the coverage performance of
the NIM CIs, if the number of simulated experimental data runs is
relatively large (e.g., 10,000 runs), it is necessary to obtain
solutions to 10,000 nonlinear equations separately for each batch of
generated data. This is extremely challenging for the uniroot function
in R software. This is because the uniroot function adopts a serial
computing strategy, whereby it must solve one equation before proceeding
to the next, which typically entails substantial additional
computational time. To address this issue, we adopt the strategy of
solving equations via the bisection method in Python while leveraging
graphics processing units (GPUs) for large-scale parallel computing.
Specifically, we only need to invoke the PyTorch package and utilize the
concept of tensors to effortlessly achieve one-time processing of
large-scale data generation and equation solving tasks. This
high-efficiency computing approach can greatly reduce the time cost
required by the uniroot function. For example, when the empirical
coverage rate of the NIM interval at the 90\% confidence level is
calculated, with 10,000 simulated experimental data runs assumed and the
parameters set as \(m = 20\), \(\lambda = 1\), and \(\varepsilon = 3\),
the coverage rates calculated via R and Python are 0.9106 and 0.9083,
respectively, with corresponding time costs of approximately 11 hours
and 15 seconds. Clearly, the latter not only ensures accurate
computation but also greatly decreases the computational time; thus, we
recommend adopting a GPU-based parallel computing strategy to address
the problem of large-scale simulation computing. The R code and Python code can be found in the supplementary materials.
\end{remark}
\section{Simulation study}

\subsection{Constrained normal case}

For the constrained normal mean \(\theta\), we have proven that the IM
CI has the exact coverage probability. There is no need to perform
simulations to demonstrate its validity. However, the complex
expressions of Bayesian credible intervals make it difficult to
theoretically compare the frequentist coverage performance between
Bayesian and IM intervals. In this section, we perform Monte Carlo
simulations to examine the frequentist performance of the two CIs. We
mainly report the empirical coverage probability and expected length.
These two measurements are commonly used to evaluate the reliability and
precision of an interval. Regarding the setting of parameter values, we
refer to \cite{Zhang2003} and \cite{Leaf2012}.
Setting \(\sigma^{2} = 1\), \(\theta = 0.0\ (0.1)\ 4.0\),
\(1 - \alpha = 0.90,\ 0.95\), and \(r = 5,\ 10,\ 20,\ 50.\) For every
given \((\theta,\sigma^{2},r,\alpha)\), we first independently resample
\(X\sim N(\theta,\sigma^{2})\) and \(W\sim\sigma^{2}\chi_{r}^{2}\)
10,000 times, calculate the two CIs of \(\theta\) accordingly, and
compute the corresponding frequencies that cover \(\theta\). We regard
the coverage frequency as the coverage probability. For the expected
length, by definition,

\[
Expected\ length = E_{\theta,\sigma^{2}}(length(CI(x,w)))
\]

\[
\ \ \ \ \ \ \ \ \ \ \ \ \ \ \ \ \ \ \ \ \ \ \ \ \ \ \ \ \ \ \ \ \ \ \ \ \ \ \ \ \ \ \ \ \ \ \ \ \ \ \  = \frac{\sum_{i = 1}^{M}\left( U\left( x,w;\theta,\sigma^{2} \right) - L\left( x,w;\theta,\sigma^{2} \right) \right)}{M},
\]

where \(M = 10,000\), \(U\left( x,w;\theta,\sigma^{2} \right)\) and
\(L\left( x,w;\theta,\sigma^{2} \right)\) are the upper and lower limits
of the interval for \(x\) and \(w\), respectively. The simulation
results are shown in Figures~\ref{fig:1} to~\ref{fig:2}.

\textbf{(Figures~\ref{fig:1} to~\ref{fig:2} are about here)}

From Figures~\ref{fig:1} to~\ref{fig:2}, the IM intervals consistently
maintain stable coverage probabilities that closely align with the
nominal levels (0.90 and 0.95) for all values of \(\theta\), indicating
that the IM intervals exhibit exact coverage performance. In contrast,
the Bayesian intervals exhibited a nonmonotonic pattern: the coverage
probability of the Bayesian intervals first increased to a peak, then
decreased sharply to a level significantly lower than that of the IM
intervals, and subsequently gradually recovered and converged toward the
nominal confidence level as the parameter \(\theta\) increased.
Furthermore, as the degrees of freedom \(r\) increased, this
undercoverage phenomenon of Bayesian intervals was barely mitigated. In
terms of expected lengths, the expected lengths of both types of
intervals monotonically increased with \(\theta\), and the expected
length of the IM intervals was consistently longer than that of the
Bayesian intervals. Notably, the coverage probability of the Bayesian
interval is significantly lower than the nominal confidence level in
many simulation scenarios, indicating that the IM method accurately
controls the coverage probability by appropriately increasing the
interval length. Moreover, the gap in expected lengths between the two
types of intervals gradually narrowed as \(r\) increased. For example,
when \(r = 50\), the expected lengths of the two intervals were almost
identical (Figure~\ref{fig:2}). Therefore, IM intervals achieve more
reliable frequentist coverage at the cost of slightly longer interval
lengths. Hence, the proposed IM CIs outperform the Bayesian intervals.

\subsection{Constrained Poisson case}

The lack of explicit expressions for Bayesian, IM, and NIM CIs precludes
direct theoretical comparisons of their coverage performance.
Therefore, Monte Carlo simulations are performed herein to evaluate the
frequentist performance of these intervals, with coverage probability
and expected length as the evaluation criteria. The parameter settings
adopted from \cite{Zhang2003} and \cite{Leaf2012}
are as follows: \(\varepsilon = 3.0\); \(\lambda = 0.0\ (0.1)\ 10.0\);
\(1 - \alpha = 0.90,\ 0.95\); and \(m = 20,\ 50,\ 100,\ 300.\) The
simulation results are shown in Figures~\ref{fig:3} to~\ref{fig:4}.

\textbf{(Figures~\ref{fig:3} to ~\ref{fig:4} are about here)}

From Figures~\ref{fig:3} to~\ref{fig:4}, as \(\lambda\) and \(m\)
increase, the coverage probability of the Bayesian intervals becomes
highly unstable, exhibiting an overall monotonically decreasing trend.
This leads to a significant decrease below the preset confidence level
in certain scenarios. In contrast, the proposed IM intervals
consistently maintain stable coverage probabilities that are above the
nominal confidence levels (0.90 and 0.95) for all values of \(\lambda\)
and \(m\), validating the effectiveness guaranteed by
Theorem~\ref{thm:3}. Moreover, as an improved variant of the
conservative IM interval, the NIM intervals exhibit smaller
fluctuations in coverage probability and align most closely with the
nominal confidence levels, demonstrating superior stability compared
with both the Bayesian and IM intervals---this can be attributed to the
correction of the discreteness of the Poisson distribution via the
random weighting method.

In terms of the expected length, all three intervals exhibit a
monotonically increasing trend with increasing \(\lambda\) and the
expected length of the IM intervals is consistently slightly longer than
that of the Bayesian intervals, which constitutes a reasonable trade-off
for ensuring stable coverage. Note that as \(m\) increases, the
differences in the expected length among the three intervals show almost
no significant change, indicating that these intervals are robust in
terms of interval width. Moreover, for all values of \(m\), the
expected lengths of the NIM and Bayesian intervals are remarkably close;
for smaller values of \(\lambda\ ( \leq 1.0)\), the expected length of
the NIM is shorter, whereas conversely, that of the Bayesian interval
is slightly shorter. In summary, the IM interval provides stable and
reliable interval estimation for constrained Poisson parameters, as it
does not rely on prior information and can address the issue of overly
conservative behavior and insufficient coverage in Bayesian intervals.
The coverage performance of the NIM method is superior to that of
Bayesian intervals. Building upon the IM interval, the NIM interval
optimizes the interval length to be close to or even shorter than that
of the Bayesian interval, resulting in coverage probabilities that are
closest to the nominal confidence levels. Therefore, both the IM and NIM
intervals are more suitable for practical application scenarios than the
Bayesian interval is.

\textbf{(Figures~\ref{fig:5} to~\ref{fig:8} are about here)}

According to Theorem~\ref{thm:4}, for each \(\alpha \in (0,1)\), the NIM method is
valid if the Monte Carlo approximation
\(H_{x,w}(\lambda)\) follows the uniform
distribution in (0,1). However, our simulation studies show that this
condition does not always hold, but the approximation
\(H_{x,w}(\lambda)\) can also provide some good
properties. Without loss of generality, we construct different pairs
\((\varepsilon,\ \lambda,m)\) to characterize the distribution function
of \(H_{x,w}(\lambda)\), where \(\varepsilon = 3.0\);
\(\lambda = 0.0,\ 0.5,\ 1.0,\ 2.0,\ 3.0,\ 4.0\); and
\(m = 20,\ 50,\ 100,\ 300.\) Specifically, for each \(\lambda\) and
\(\varepsilon\), we generate 10,000 Poisson random samples \((x,w)\) and
obtain a Monte Carlo estimate of the distribution function of
\(H_{x,w}(\lambda)\). Clearly, for small to moderate values of \(m\),
Figures~\ref{fig:5} and~\ref{fig:6} demonstrate that the approximate
\(H_{x,w}(\lambda)\)
is sufficiently close to Unif(0,1). Moreover, for large \(m\), Figures
~\ref{fig:7} and~\ref{fig:8} show that the approximate
\(H_{x,w}(\lambda)\) has the
following desirable property:

\[
\sup_{\lambda \in A}{P_{(x,w)|\lambda,\varepsilon}\left\{ {pl}_{x,w}(A) \leq \alpha \right\} = \sup_{\lambda \in A}{P_{(x,w)|\lambda,\varepsilon}\left\{ H_{x,w}(\lambda) \leq \frac{\alpha}{2}\bigcup_{}^{}{H_{x,w}(\lambda) \geq 1 - \frac{\alpha}{2}} \right\}}}
\]

\[
\leq \sup_{\lambda \in A}{P_{(x,w)|\lambda,\varepsilon}\left\{ V \leq \frac{\alpha}{2}\bigcup_{}^{}{V \geq 1 - \frac{\alpha}{2}} \right\} = \alpha,}
\]

where \(V\sim Unif(0,1)\). Hence, the approximate plausibility function
of the NIM is valid.

\section{Real data analysis}

In this section, we illustrate the application of the proposed intervals
with two real examples.

\emph{Example 1}. Neutrino mass inference

The neutrino mass is one of the core unresolved issues in particle
physics. For many years, researchers have universally believed that
neutrinos are massless. However, a Super-Kamiokande experiment in Japan
\citep{Fukuda1998} first confirmed that neutrinos have nonzero masses
through neutrino oscillation. In recent years, Germany's KATRIN
experiment \citep{Nucciotti2022} improved the precision of the neutrino
mass to 0.8 electron volts (eV) and further reduced it to 0.45 eV
\citep{Aker2025}. Using the constrained normal mean model, we set
\(X = 0.45\). Since existing studies have not provided measured values
of \(W\), we consider several possible values of \(W\), such as
\(W = 0.01,\ 0.1,\ 0.5,\ 1.0,\ 5.0,\ 10\), which cover small, medium,
and large value types. For the specification of \(r\), we follow
\cite{Zhang2003} and consider four cases: \(r = 5\), 10, 20, and 50.
We employ the Bayesian and IM methods to compute the 90\% and 95\% CIs
for the neutrino mass, with the corresponding results summarized in
Table~\ref{tab:1}.

\textbf{(Table~\ref{tab:1} is about here)}

Table~\ref{tab:1} presents the 0.90 and 0.95 CIs and their widths for neutrino
mass derived from the Bayesian and IM methods under different
combinations of observations \(W\) and degrees of freedom \(r\).
Clearly, the lengths of both the Bayesian and the IM CIs decrease with
increasing \(r\), and the length of the IM interval is shorter than that
of the Bayesian interval in most cases. Notably, the length of the
Bayesian interval is significantly shorter than that of the IM interval
in certain cases; for example, when \(W = 10\) and \(r = 5\), the level
0.90 Bayesian and IM CIs are {[}0.0000, 3.0651{]} and {[}0.0000,
6.2209{]}, respectively, and the corresponding level 0.95 CIs are
{[}0.0000, 3.8400{]} and {[}0.0000, 8.9531{]}. This phenomenon may lead
to the coverage probability of the Bayesian interval failing to meet the
prespecified confidence level, as shown in Figures~\ref{fig:1}
and~\ref{fig:2}. In contrast, the IM interval can satisfy exact
coverage probability requirement. More importantly, unlike the Bayesian
interval, the plausibility function of the IM method can provide more
information for unknown parameters \(\theta\). For instance, given the significance
level \(\alpha = 0.1\), when \(X = 0.45\), \(W = 1.0\) and \(r = 10\),
Figure~\ref{fig:9}–(a) demonstrates that each point \(\theta\) in the IM interval
{[}0.2324, 0.6659{]} is individually sufficiently plausible. Therefore,
we recommend the use of the IM CI in the detection of neutrino masses.

\textbf{(Figure~\ref{fig:9} is about here)}

\emph{Example 2}. Estimation of neutrino signal strength

As noted by \citet{Dawn2017}, approximately 100 trillion neutrinos pass
through the human body each second. However, these ghostly subatomic
particles cannot be detected by our senses, as they interact only via
weak forces, which makes their detection exceedingly challenging. To
date, detecting neutrino signals remains a significant research topic in
high-energy physics. Nevertheless, in the presence of background
interference, experiments may record observations with very low
probability, such as \(X = 0\). In such cases, classical Neyman CIs may
yield empty sets, making it particularly important to construct accurate
CIs for low-probability observations. On the basis of \cite{Eitel1999},
the expected background \(\varepsilon\) is \(2.88 \pm 0.13\); here, we
consider low-probability observations \(X = 0,\ 1\). For the observation
\(W\) and the parameter \(m\), according to \cite{Zhang2003},
we consider \(W = 10,\ 20,\ 30,\ 40\) and \(m = 20,\ 50,\ 100,\ 300\).
Using the constrained Poisson mean model, we compute the Bayesian, IM,
and NIM CIs for the neutrino signal strength at the 90\% and 95\%
confidence levels, respectively, with the corresponding results
summarized in Table~\ref{tab:2}.

\textbf{(Table~\ref{tab:2} is about here)}

As shown in Table~\ref{tab:2}, when \(X = 0\), the Bayesian CI remains unchanged
across different parameter settings, indicating its insensitivity to
observed data. In contrast, both the IM and NIM intervals provide
flexible CIs for the parameter of interest under different combinations
of observed values, with the NIM interval consistently exhibiting the
shortest interval length. Furthermore, when \(X = 1\), the
Bayesian interval has the smallest width in most cases compared with the
IM and NIM intervals. Although the IM interval can be narrower than the
Bayesian method in certain scenarios---for example, when \(X = 1\),
\(m = 20\), and \(W = 40\), the 90\% Bayesian and IM confidence
intervals are {[}0.00, 3.02{]} and {[}0.00, 2.78{]}, respectively---the
width of the IM interval is generally significantly greater than that of
the Bayesian interval. This is an inherent consequence of the
conservative inequality-constrained model adopted by the IM method to
guarantee theoretical coverage, whereas the Bayesian interval may suffer
from undercoverage, a conclusion also supported by Figures~\ref{fig:3}
to~\ref{fig:4}.

Notably, as an improved version of the conservative IM interval, the NIM
method consistently yields shorter widths than the IM interval for
\(X = 1\). Its width is closer to that of the Bayesian interval and can
even be narrower in some cases. For instance, when \(X = 1\),
\(m = 20\), and \(W = 40\), the 90\% Bayesian and NIM confidence
intervals are {[}0.00, 2.78{]} and {[}0.00, 2.14{]}, respectively.
Combined with the simulation analysis in Section 4.2, the NIM interval
not only addresses the issues of unstable coverage performance and
insensitivity to extreme data in the Bayesian interval but also
mitigates the length redundancy observed in the IM interval, thereby
enhancing its practical utility for estimating neutrino signal strength
in low-probability observation scenarios.

To better understand the difference between the IM and NIM CIs,
Figure~\ref{fig:9}–(b) shows their plausibility functions. By setting a significance
level \(\alpha \in (0,1)\), the corresponding CIs of the two methods can
be straightforwardly obtained from the plausibility function. For
example, when \(\alpha = 0.05\), \(X = 0\), \(W = 10\), and \(m = 20\),
the IM and NIM CIs are {[}0.00, 3.19{]} and {[}0.00, 2.49{]},
respectively. Clearly, the NIM interval is shorter than the IM interval.
More importantly, these CIs derived under the inferential model
framework have a desirable property: every point inside the interval has
an intuitive plausibility measure. In contrast, the Bayesian CI cannot
provide such a measure and is generally longer than the NIM interval.
Therefore, we recommend the use of the NIM method for estimating
neutrino signal strength in practical applications.

\section{Discussion}

Building upon the IM framework, this study proposes interval estimation
methods for constrained parameters in normal and Poisson distributions
with unknown nuisance parameters. The core contribution lies in
addressing key shortcomings of conventional approaches. Compared with
Bayesian methods, which exhibit undercoverage in the normal and Poisson
cases, the IM intervals integrate parameter constraints with predictive
random sets to ensure stable coverage closely aligned with nominal
levels. The NIM intervals further correct for the discreteness of
Poisson distributions via random weighting, improving inferential
efficiency while achieving superior interval lengths, particularly under
low-count scenarios where they significantly outperform Bayesian
intervals. Additionally, the proposed IM-based methods are prior-free,
and each parameter value within the interval is accompanied by a
likelihood-based measure, providing more physically interpretable
inference for fields such as high-energy physics.

From a practical perspective, the IM and NIM methods demonstrate
considerable value in cutting-edge scientific problems such as neutrino
mass measurement and signal strength estimation. They not only resolve
issues where traditional intervals may be empty or overly conservative
under low-count observations, but also offer refined characterization of
parameter uncertainty through likelihood functions, thereby delivering
more reliable statistical support for experimental design and result
interpretation. As a prior-free probabilistic inferential framework,
this approach not only enriches the methodological system for
constrained parameter inference but also offers an effective new pathway
for analyzing complex data in fields such as high-energy physics and
astronomical observation.

Despite these advances, limitations remain in the inference for
constrained Poisson models with nuisance parameters. Specifically, the
exact distribution function of the parameter of interest in the joint
linking model of data, parameters, and auxiliary variables---based on
the random weighting idea---is approximate, and better methods are
required to further improve frequentist coverage control. Future
research could focus on extending the proposed methods to other
constrained parameter scenarios, such as exponential and multinomial
distributions, as well as investigating inference problems involving
multiple parameter constraints or data-dependent settings.

\textbf{Data availability statement:} The data are contained within the
article.

\textbf{Conflict of interest:} The authors declare no conflicts of
interest.

\newpage
\appendix

\section*{Appendix A: Technical details}

\textbf{Proof of Theorem 1}

To prove Theorem 1, it suffices to show that the IM is valid, which has already been established in Theorem 2 of \cite{Martin2013}.

\textbf{Proof of Theorem 2}

From (\ref{eq:4}), \(F\left( \frac{X - \theta}{\sqrt{W}} \right)\sim Unif(0,1)\); then,

\[\sup_{\theta \in A}{P_{(X,W)|\theta,\sigma^{2}}\left\{ {pl}_{X,W}(A) \leq \alpha|\theta_{0} = 0 \right\}}\]

\[= \sup_{\theta \in A}{P_{(X,W)|\theta,\sigma^{2}}\left\{ F\left( \frac{X}{\sqrt{W}} \right) \geq 1 - \frac{\alpha}{2},\frac{1}{2} < F\left( \frac{X}{\sqrt{W}} \right) \leq 1|\theta_{0} = 0 \right\}} = \frac{\alpha}{2},\]

when \(\theta_{0} = 0\), and

\[\sup_{\theta \in A}{P_{(X,W)|\theta,\sigma^{2}}\left\{ {pl}_{X,W}(A) \leq \alpha|\theta = \theta_{0} \right\}}\]

\[
\begin{aligned}
= \sup_{\theta \in A}\Bigg\{ 
& P_{(X,W)|\theta,\sigma^{2}}\left\{ 
F\left( \frac{X - \theta_{0}}{\sqrt{W}} \right) \leq \frac{\alpha}{2},
F\left( \frac{X - \theta_{0}}{\sqrt{W}} \right) < \frac{1}{2}
\mid \theta = \theta_{0} 
\right\} \\
&+ P_{(X,W)|\theta,\sigma^{2}}\left\{ 
F\left( \frac{X - \theta_{0}}{\sqrt{W}} \right) \geq 1 - \frac{\alpha}{2},
F\left( \frac{X - \theta_{0}}{\sqrt{W}} \right) \geq \frac{1}{2}
\mid \theta = \theta_{0} 
\right\}
\Bigg\}
\end{aligned}
\]

\[= \alpha,\]

when \(\theta_{0} > 0\). Then, \(\sup_{\theta \in A}{P_{(X,W)|\theta,\sigma^{2}}\left\{ {pl}_{X,W}(A) \leq \alpha \right\}} \leq \alpha.\) By Theorem 1, the proof is complete.

\textbf{Proof of Theorem 3}

From (\ref{eq:10}), we have \(K_{X,W + 1}(\lambda) \leq V \leq K_{X + 1,W}(\lambda),\ \ V\sim Unif(0,1)\). Given any \(\alpha \in (0,1)\),

\[\sup_{\lambda \in A}P_{(X,W)|\lambda,\varepsilon}\left\{ {pl}_{X,W}(A) \leq \alpha{|\lambda}_{0} = 0 \right\}\]

\[{= \sup_{\lambda \in A}P}_{(X,W)|\lambda,\varepsilon}\left\{ 2K_{X,W + 1}(0) \leq \alpha,K_{X,W + 1}(0) < \frac{1}{2}|\lambda_{0} = 0 \right\} = \frac{\alpha}{2},\]

when \(\lambda_{0} = 0\), and

\[P_{(X,W)|\lambda,\varepsilon}\left\{ {pl}_{X,W}(A) \leq \alpha{|\lambda = \lambda}_{0} \right\}\]

\[= P_{(X,W)|\lambda,\varepsilon}\left\{ 2K_{X,W + 1}\left( \lambda_{0} \right) \leq \alpha,K_{X,W + 1}\left( \lambda_{0} \right) < \frac{1}{2} \right\} + P_{(X,W)|\lambda,\varepsilon}\left\{ 2 - 2K_{X + 1,W}\left( \lambda_{0} \right) \leq \alpha,K_{X + 1,W}\left( \lambda_{0} \right) > \frac{1}{2} \right\}\]

\[{\leq P_{(X,W)|\lambda,\varepsilon}\left\{ V \leq \frac{\alpha}{2},V < \frac{1}{2} \right\} + P}_{(X,W)|\lambda,\varepsilon}\left\{ V \geq 1 - \frac{\alpha}{2},V > \frac{1}{2} \right\},\]

when \(\lambda_{0} > 0\). Afterward, \(\sup_{\lambda \in A}{P_{(X,W)|\lambda,\varepsilon}\left\{ {pl}_{X,W}(A) \leq \alpha \right\}} \leq \alpha.\) Hence, the proof is complete.

\textbf{Proof of Theorem 4}

Given any \(\alpha \in (0,1)\), since \(H_{x,w}(\lambda)\sim Unif(0,1)\) for \((x,w)\sim P_{(x,w)|\lambda,\varepsilon}\) for all \(\lambda\), and

\[{pl}_{x,w}(A) = P_{S}{\left\{ \Theta_{x,w}(S) \cap A \neq \varnothing \right\} = P_{S}{\left\{ H_{x,w}(\lambda) \in S \right\} = P_{S}{\left\{ v \in S \right\}.}}}\]

Moreover, the predictive random set \(S\sim P_{S}\) is valid; that is, \(P_{S}\left\{ v \in S \right\} \geq_{st}Unif(0,1)\).

Hence, \(\sup_{\lambda \in A}{P_{(x,w)|\lambda,\varepsilon}\left\{ {pl}_{x,w}(A) \leq \alpha \right\} = \sup_{\lambda \in A}{P_{(x,w)|\lambda,\varepsilon}\left\{ P_{S}\left\{ v \in S \right\} \leq \alpha \right\}}}\)

\[\ \ \ \ \ \ \ \ \ \ \  \leq \sup_{\lambda \in A}{P_{(x,w)|\lambda,\varepsilon}\left\{ Unif(0,1) \leq \alpha \right\}} = \alpha.\]

Hence, the proof is complete.

\newpage
\section*{Appendix B: Figures 1 to 9 and Tables 1 to 2}

\clearpage
\refstepcounter{figure}
\begin{center}
\includegraphics[width=1.0\textwidth,height=0.95\textheight,keepaspectratio]{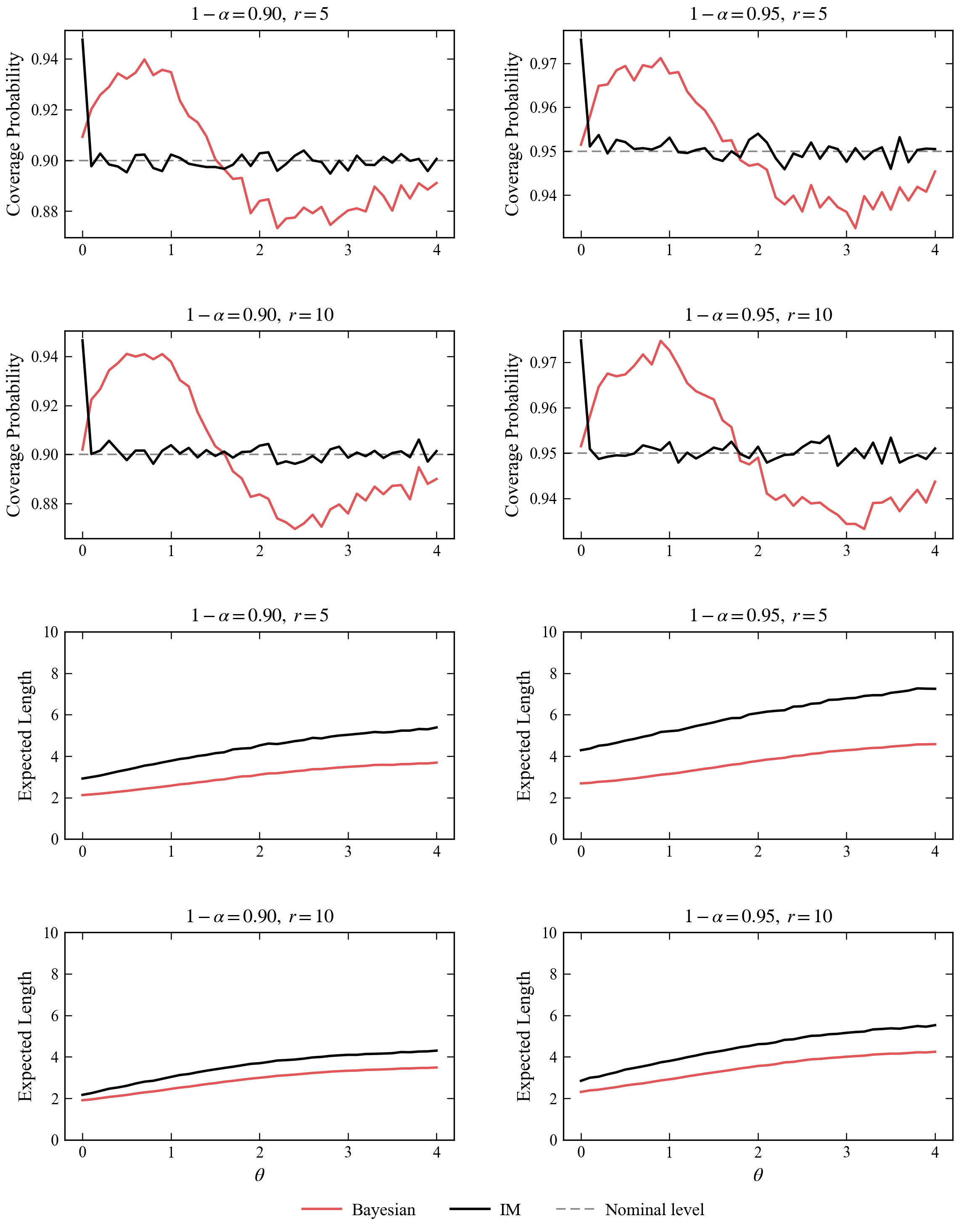}
\end{center}
\noindent\textbf{Figure \thefigure.} Coverage probabilities and expected lengths of the Bayesian and IM confidence intervals for \(\theta \in 0.0\ (0.1)\ 4.0\) when \(r = 5,\ 10\) and \(1 - \alpha = 0.90,\ 0.95\).\label{fig:1}

\clearpage
\refstepcounter{figure}
\begin{center}
\includegraphics[width=1.0\textwidth,height=0.95\textheight,keepaspectratio]{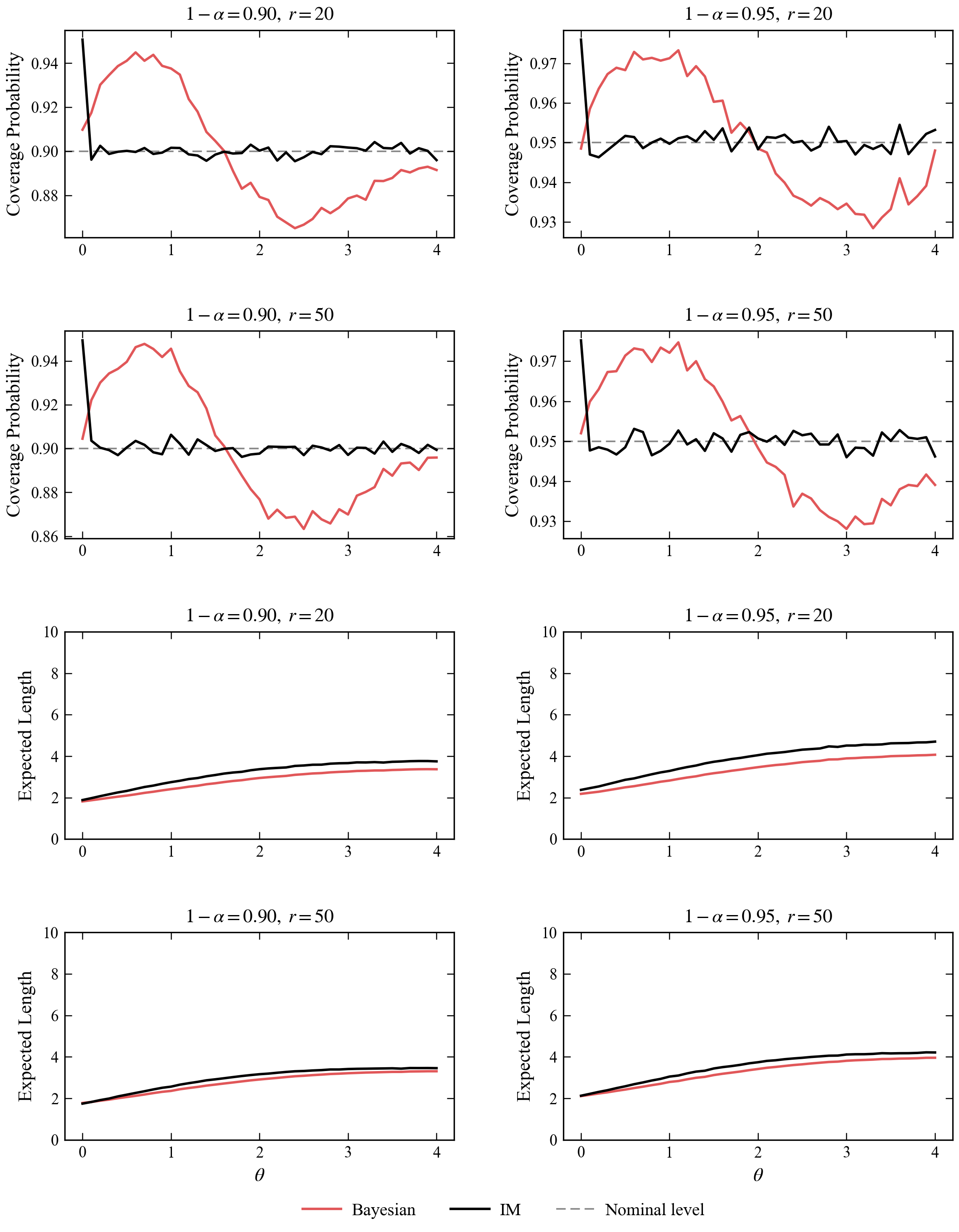}
\end{center}
\noindent\textbf{Figure \thefigure.} Coverage probabilities and expected lengths of the Bayesian and IM confidence intervals for \(\theta \in 0.0\ (0.1)\ 4.0\) when \(r = 20,\ 50\) and \(1 - \alpha = 0.90,\ 0.95\).\label{fig:2}

\clearpage
\refstepcounter{figure}
\begin{center}
\includegraphics[width=1.0\textwidth,height=0.95\textheight,keepaspectratio]{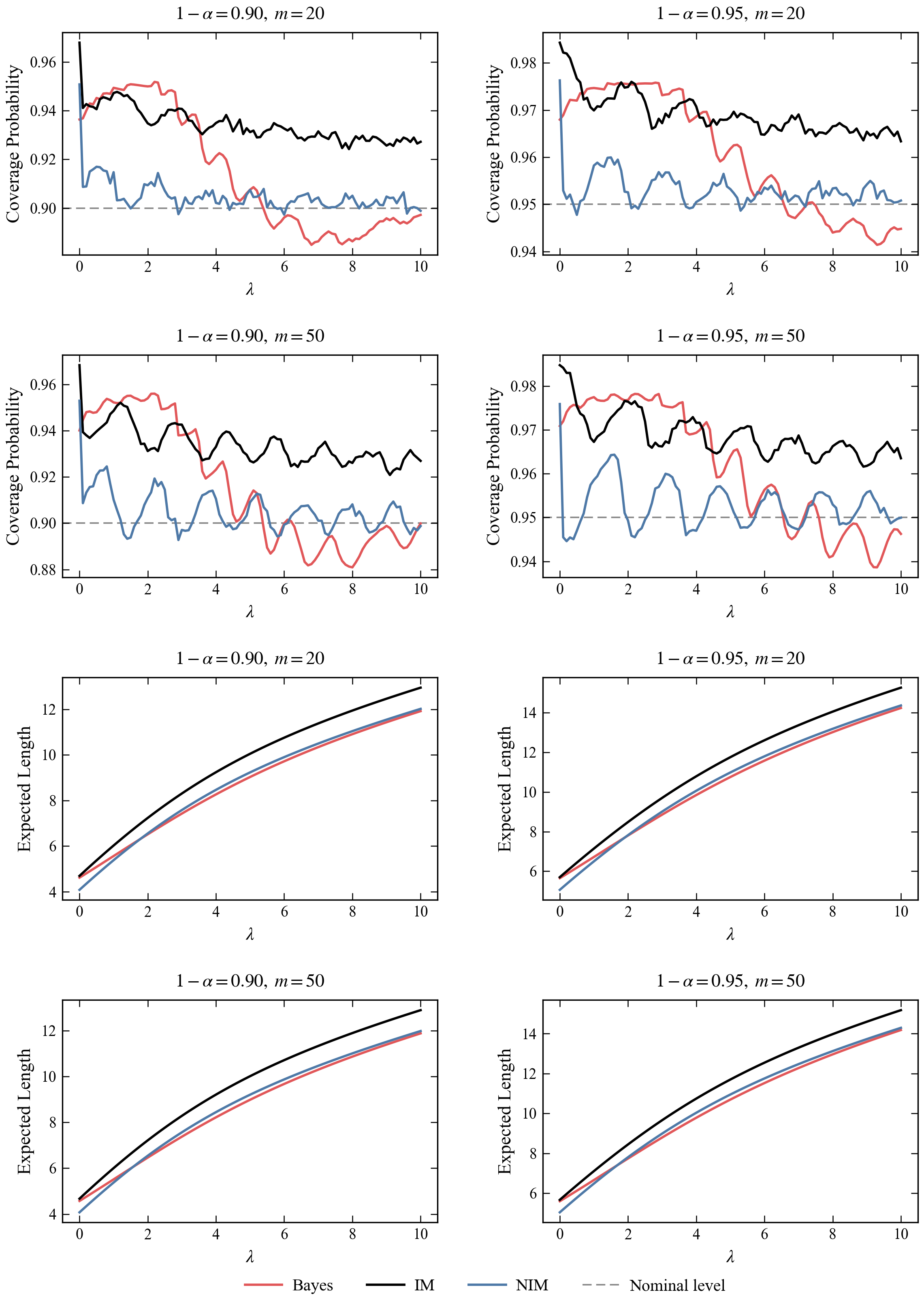}
\end{center}
\noindent\textbf{Figure \thefigure.} Coverage probabilities and expected lengths of the Bayesian, IM, and NIM confidence intervals for \(\lambda \in 0.0\ (0.1)\ 10.0\) when \(\varepsilon = 3.0\), \(m = 20,\ 50\) and \(1 - \alpha = 0.90,\ 0.95\).\label{fig:3}

\clearpage
\refstepcounter{figure}
\begin{center}
\includegraphics[width=1.0\textwidth,height=0.95\textheight,keepaspectratio]{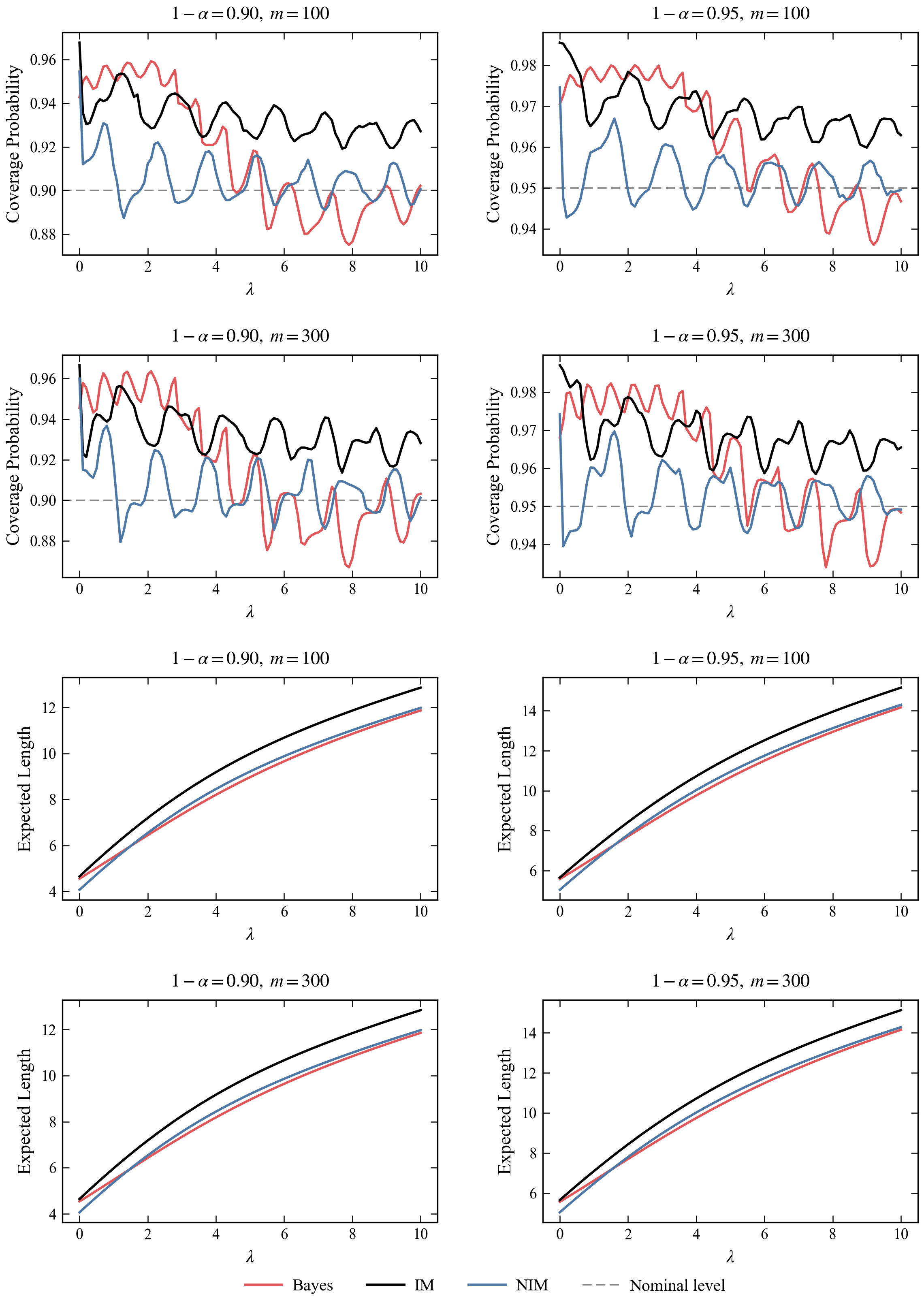}
\end{center}
\noindent\textbf{Figure \thefigure.} Coverage probabilities and expected lengths of the Bayesian, IM, and NIM confidence intervals for \(\lambda \in 0.0\ (0.1)\ 10.0\) when \(\varepsilon = 3.0\), \(m = 100,\ 300\) and \(1 - \alpha = 0.90,\ 0.95\).\label{fig:4}

\clearpage
\refstepcounter{figure}
\begin{center}
\includegraphics[width=0.95\textwidth,height=0.82\textheight,keepaspectratio]{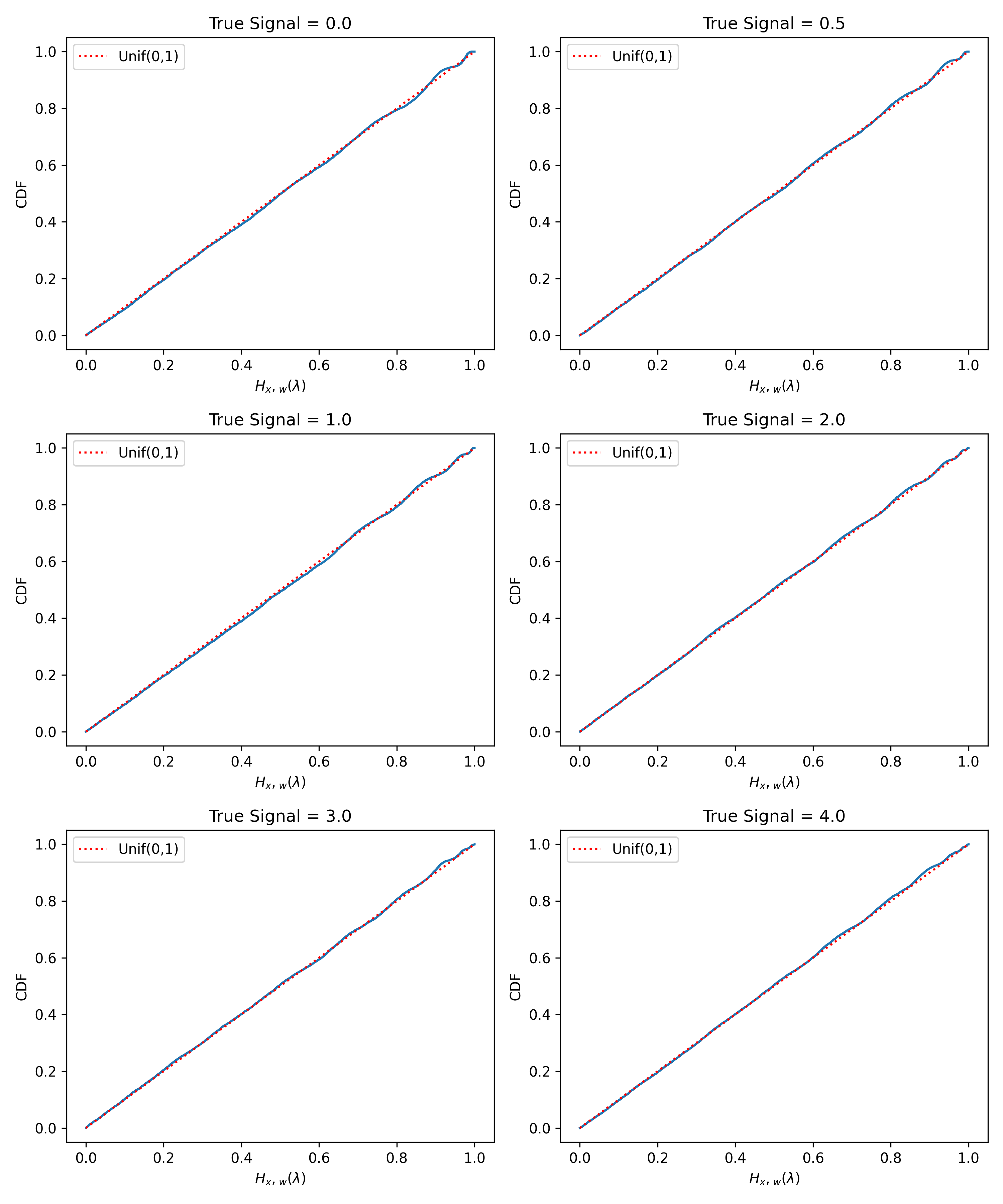}
\end{center}
\noindent\textbf{Figure \thefigure.} Empirical distribution functions of \(H_{x,w}(\lambda)\) (blue) compared with those of Unif(0,1) (red) based on the random samples \(X\sim Poisson(\theta)\) and \(W\sim Poisson(m\varepsilon)\), where \(\theta = \varepsilon + \lambda\), \(\varepsilon = 3.0\), \(\lambda = 0.0,\ 0.5,\ 1.0,\ 2.0,\ 3.0,\ 4.0\), and \(m = 20\).\label{fig:5}

\clearpage
\refstepcounter{figure}
\begin{center}
\includegraphics[width=0.95\textwidth,height=0.82\textheight,keepaspectratio]{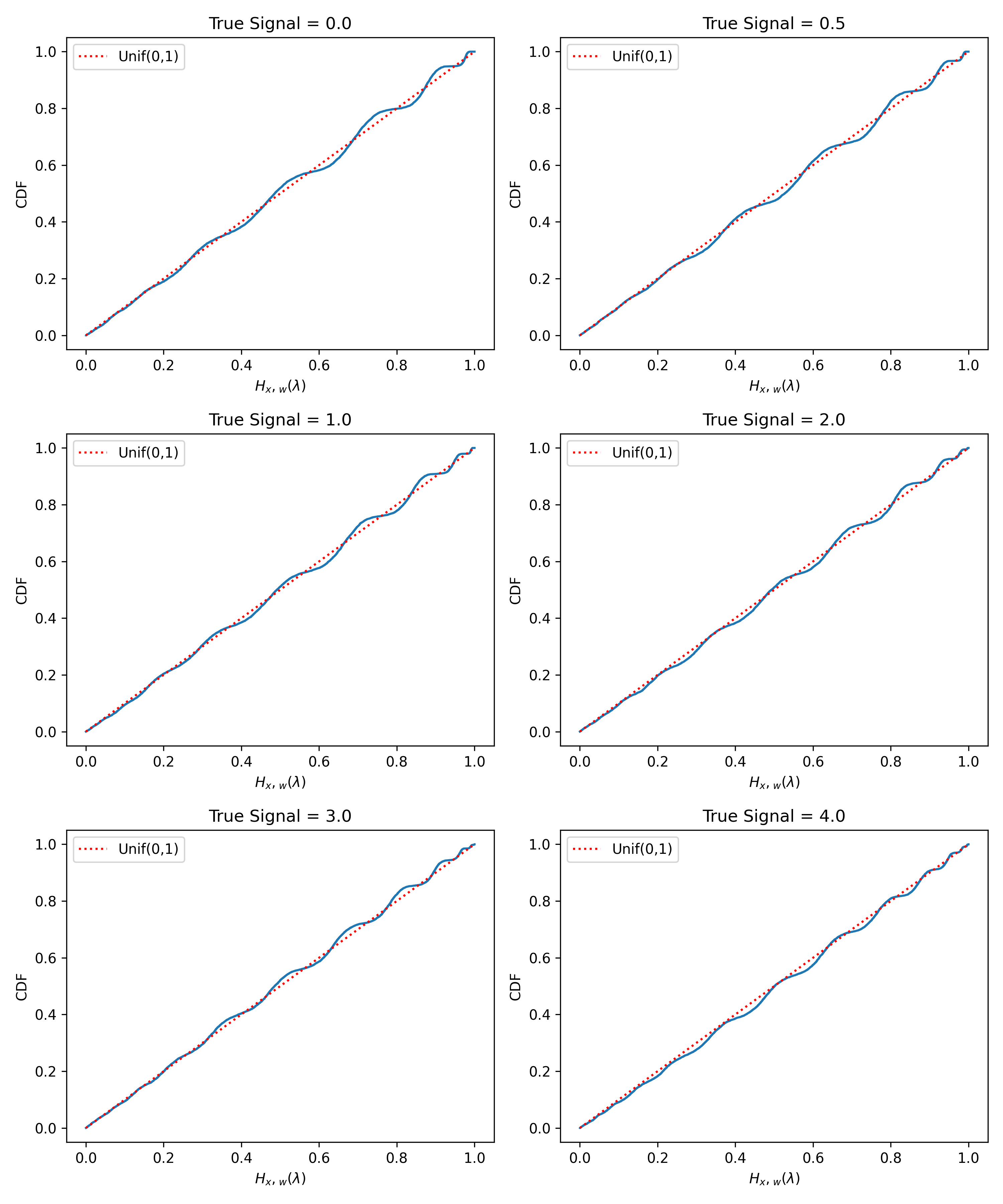}
\end{center}
\noindent\textbf{Figure \thefigure.} Empirical distribution functions of \(H_{x,w}(\lambda)\) (blue) compared with those of Unif(0,1) (red) based on the random samples \(X\sim Poisson(\theta)\) and \(W\sim Poisson(m\varepsilon)\), where \(\theta = \varepsilon + \lambda\), \(\varepsilon = 3.0\), \(\lambda = 0.0,\ 0.5,\ 1.0,\ 2.0,\ 3.0,\ 4.0\), and \(m = 50\).\label{fig:6}

\clearpage
\refstepcounter{figure}
\begin{center}
\includegraphics[width=0.95\textwidth,height=0.82\textheight,keepaspectratio]{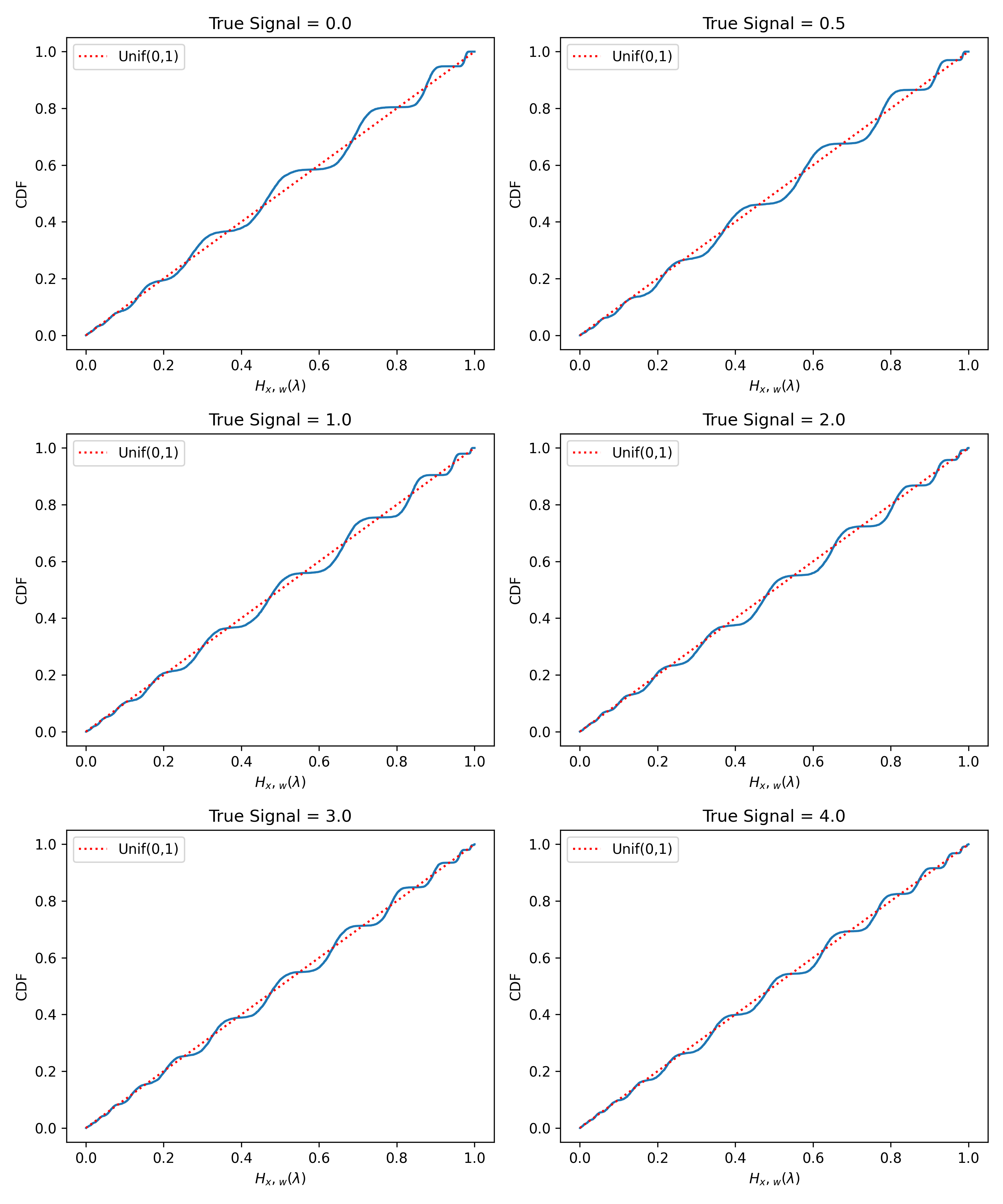}
\end{center}
\noindent\textbf{Figure \thefigure.} Empirical distribution functions of \(H_{x,w}(\lambda)\) (blue) compared with those of Unif(0,1) (red) based on the random samples \(X\sim Poisson(\theta)\) and \(W\sim Poisson(m\varepsilon)\), where \(\theta = \varepsilon + \lambda\), \(\varepsilon = 3.0\), \(\lambda = 0.0,\ 0.5,\ 1.0,\ 2.0,\ 3.0,\ 4.0\), and \(m = 100\).\label{fig:7}

\clearpage
\refstepcounter{figure}
\begin{center}
\includegraphics[width=0.95\textwidth,height=0.82\textheight,keepaspectratio]{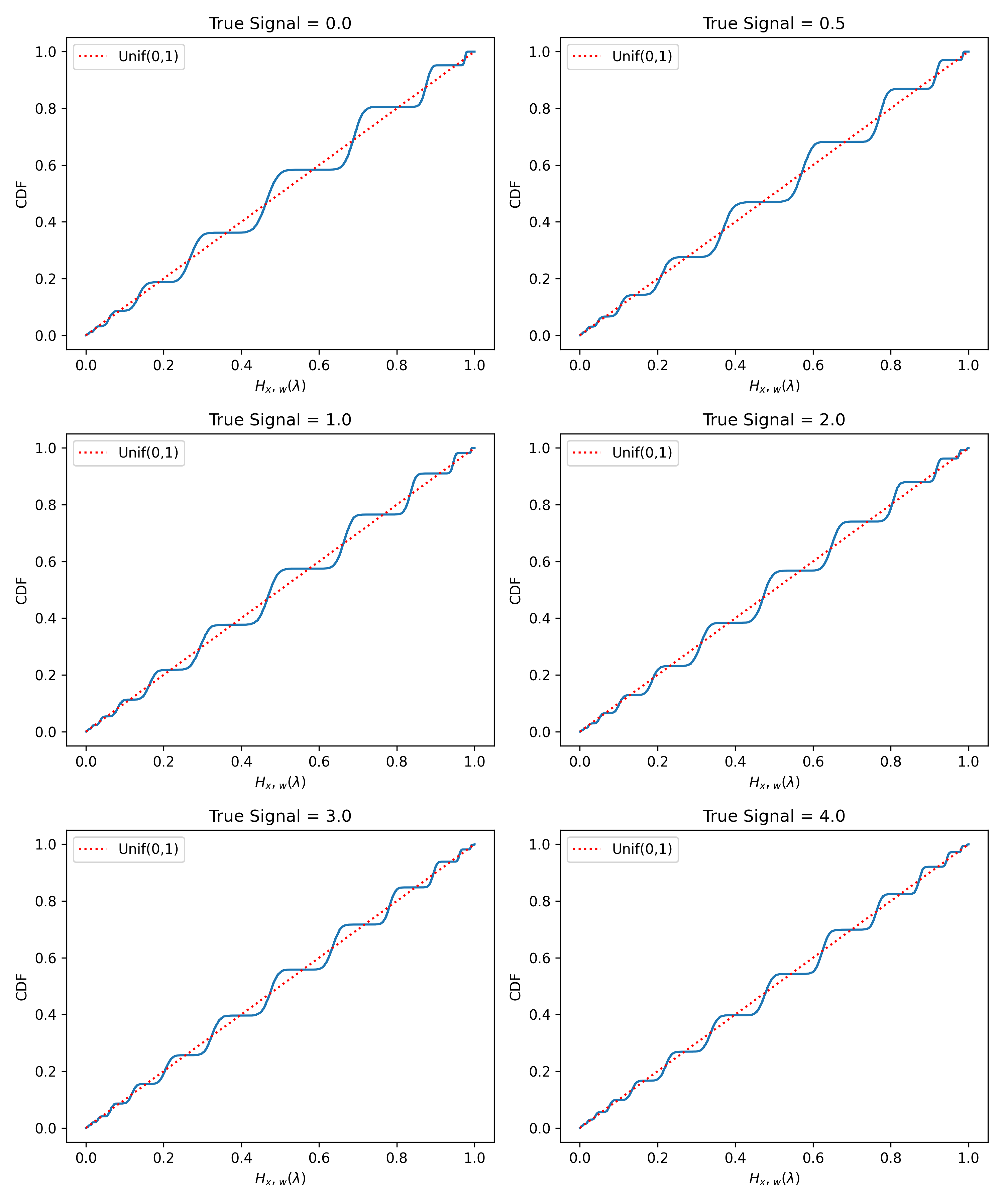}
\end{center}
\noindent\textbf{Figure \thefigure.} Empirical distribution functions of \(H_{x,w}(\lambda)\) (blue) compared with those of Unif(0,1) (red) based on the random samples \(X\sim Poisson(\theta)\) and \(W\sim Poisson(m\varepsilon)\), where \(\theta = \varepsilon + \lambda\), \(\varepsilon = 3.0\), \(\lambda = 0.0,\ 0.5,\ 1.0,\ 2.0,\ 3.0,\ 4.0\), and \(m = 300\).\label{fig:8}

\clearpage
\refstepcounter{figure}
\begin{center}
\includegraphics[width=0.95\textwidth,height=0.82\textheight,keepaspectratio]{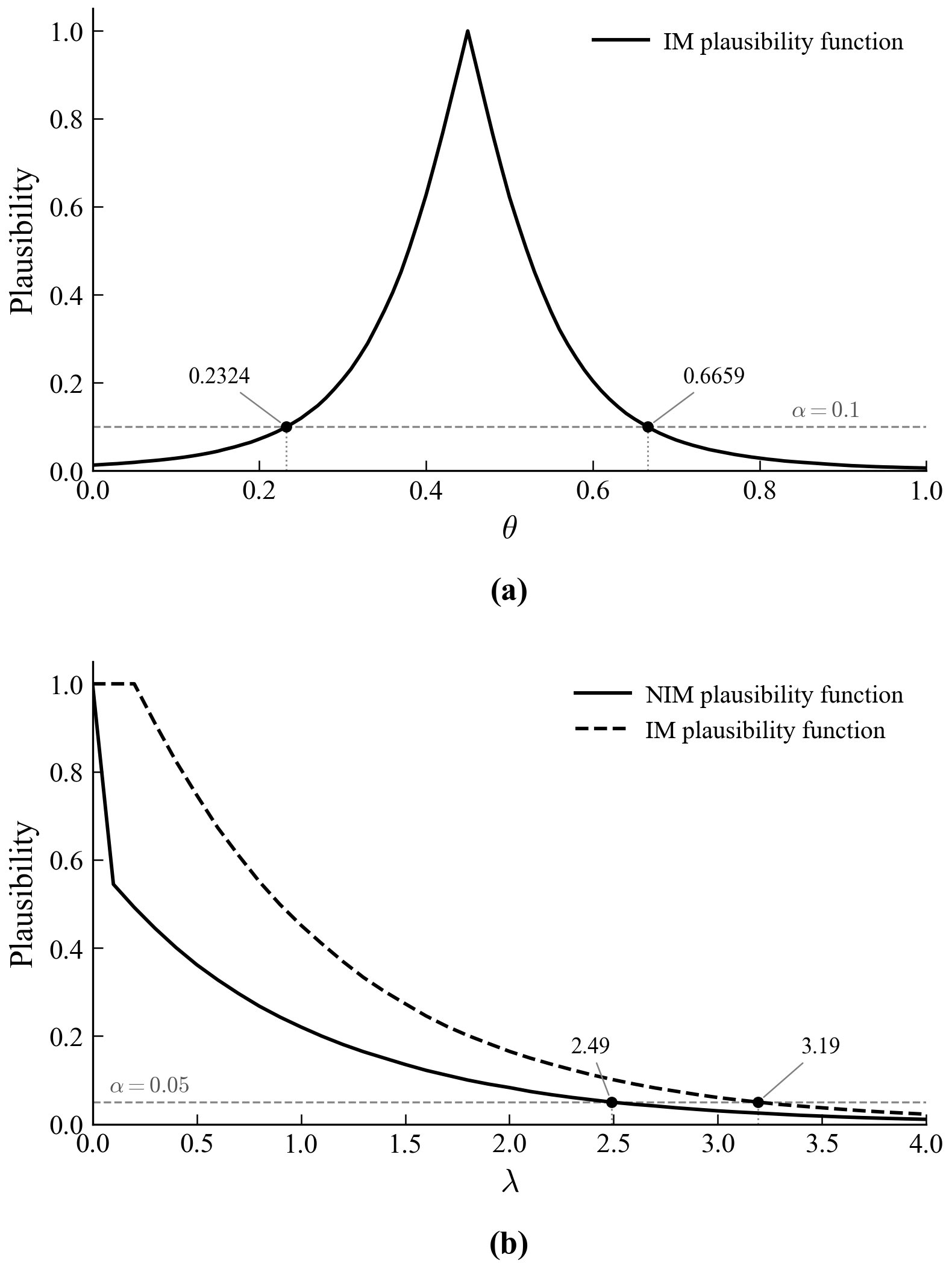}
\end{center}
\noindent\textbf{Figure \thefigure.} Plausibility functions for constrained normal and Poisson datasets. (a) \(X = 0.45\), \(W = 1.0\), and \(r = 10\); (b) \(X = 0\), \(W = 10\), and \(m = 20\).\label{fig:9}

\clearpage
\refstepcounter{table}\noindent\textbf{Table 1.} Levels 0.90 and 0.95 Bayesian and IM CIs and their interval widths under different combinations of observations \((X,W,r)\).\label{tab:1}

\begin{longtable}[]{@{}
  >{\centering\arraybackslash}p{(\linewidth - 10\tabcolsep) * \real{0.0852}}
  >{\centering\arraybackslash}p{(\linewidth - 10\tabcolsep) * \real{0.0682}}
  >{\centering\arraybackslash}p{(\linewidth - 10\tabcolsep) * \real{0.2116}}
  >{\centering\arraybackslash}p{(\linewidth - 10\tabcolsep) * \real{0.2117}}
  >{\centering\arraybackslash}p{(\linewidth - 10\tabcolsep) * \real{0.2116}}
  >{\centering\arraybackslash}p{(\linewidth - 10\tabcolsep) * \real{0.2117}}@{}}
\toprule\noalign{}
\multicolumn{2}{@{}>{\centering\arraybackslash}p{(\linewidth - 10\tabcolsep) * \real{0.1534} + 2\tabcolsep}}{%
\begin{minipage}[b]{\linewidth}\centering
Parameter
\end{minipage}} & \multicolumn{2}{>{\centering\arraybackslash}p{(\linewidth - 10\tabcolsep) * \real{0.4233} + 2\tabcolsep}}{%
\begin{minipage}[b]{\linewidth}\centering
Level 0.90 CIs and interval widths
\end{minipage}} & \multicolumn{2}{>{\centering\arraybackslash}p{(\linewidth - 10\tabcolsep) * \real{0.4233} + 2\tabcolsep}@{}}{%
\begin{minipage}[b]{\linewidth}\centering
Level 0.95 CIs and interval widths
\end{minipage}} \\
\midrule\noalign{}
\endhead
\bottomrule\noalign{}
\endlastfoot
\(W\) & \(r\) & Bayesian & IM & Bayesian & IM \\
0.01 & 5 & {[}0.3599, 0.5401{]} & {[}0.4442, 0.4558{]} & {[}0.3351, 0.5649{]} & {[}0.4415, 0.4585{]} \\
& & 0.1802 & \ul{0.0116} & 0.2298 & \ul{0.0171} \\
& 10 & {[}0.3927, 0.5073{]} & {[}0.4478, 0.4522{]} & {[}0.3795, 0.5205{]} & {[}0.4471, 0.4529{]} \\
& & 0.1146 & \ul{0.0043} & 0.1409 & \ul{0.0058} \\
& 20 & {[}0.4114, 0.4886{]} & {[}0.4491, 0.4509{]} & {[}0.4034, 0.4966{]} & {[}0.4488, 0.4512{]} \\
& & 0.0771 & \ul{0.0019} & 0.0933 & \ul{0.0024} \\
& 50 & {[}0.4263, 0.4737{]} & {[}0.4497, 0.4503{]} & {[}0.4216, 0.4784{]} & {[}0.4496, 0.4504{]} \\
& & 0.0474 & \ul{0.0007} & 0.0568 & \ul{0.0008} \\
0.10 & 5 & {[}0.1766, 0.7234{]} & {[}0.3919, 0.5078{]} & {[}0.1106, 0.7894{]} & {[}0.3643, 0.5354{]} \\
& & 0.5468 & \ul{0.1159} & 0.6788 & \ul{0.1711} \\
& 10 & {[}0.2691, 0.6309{]} & {[}0.4283, 0.4718{]} & {[}0.2278, 0.6722{]} & {[}0.4211, 0.4789{]} \\
& & 0.3619 & \ul{0.0434} & 0.4444 & \ul{0.0578} \\
& 20 & {[}0.3280, 0.5720{]} & {[}0.4406, 0.4595{]} & {[}0.3025, 0.5975{]} & {[}0.4381, 0.4619{]} \\
& & 0.2439 & \ul{0.0189} & 0.2950 & \ul{0.0238} \\
& 50 & {[}0.3751, 0.5249{]} & {[}0.4465, 0.4535{]} & {[}0.3602, 0.5398{]} & {[}0.4458, 0.4542{]} \\
& & 0.1499 & \ul{0.0070} & 0.1797 & \ul{0.0085} \\
0.50 & 5 & {[}0.0000, 0.9446{]} & {[}0.1590, 0.7385{]} & {[}0.0000, 1.1153{]} & {[}0.0235, 0.8769{]} \\
& & 0.9446 & \ul{0.5795} & 1.1153 & \ul{0.8533} \\
& 10 & {[}0.0836, 0.8164{]} & {[}0.3416, 0.5588{]} & {[}0.0212, 0.8788{]} & {[}0.3059, 0.5939{]} \\
& & 0.7328 & \ul{0.2172} & 0.8576 & \ul{0.2880} \\
& 20 & {[}0.1811, 0.7189{]} & {[}0.4028, 0.4971{]} & {[}0.1274, 0.7726{]} & {[}0.3907, 0.5094{]} \\
& & 0.5378 & \ul{0.0943} & 0.6452 & \ul{0.1187} \\
& 50 & {[}0.2824, 0.6176{]} & {[}0.4326, 0.4674{]} & {[}0.2492, 0.6508{]} & {[}0.4288, 0.4712{]} \\
& & 0.3352 & \ul{0.0348} & 0.4017 & \ul{0.0424} \\
1.00 & 5 & {[}0.0000, 1.1791{]} & {[}0.0000, 1.0286{]} & {[}0.0000, 1.4212{]} & {[}0.0000, 1.3117{]} \\
& & 1.1791 & \ul{1.0286} & 1.4212 & \ul{1.3117} \\
& 10 & {[}0.0000, 0.9042{]} & {[}0.2324, 0.6659{]} & {[}0.0000, 1.0419{]} & {[}0.1624, 0.7374{]} \\
& & 0.9042 & \ul{0.4335} & 1.0419 & \ul{0.5750} \\
& 20 & {[}0.0930, 0.8070{]} & {[}0.3556, 0.5441{]} & {[}0.0338, 0.8662{]} & {[}0.3313, 0.5689{]} \\
& & 0.7140 & \ul{0.1886} & 0.8324 & \ul{0.2376} \\
& 50 & {[}0.2138, 0.6862{]} & {[}0.4153, 0.4849{]} & {[}0.1675, 0.7325{]} & {[}0.4076, 0.4923{]} \\
& & 0.4724 & \ul{0.0696} & 0.5651 & \ul{0.0847} \\
5.00 & 5 & {[}0.0000, 2.2436{]} & {[}0.0000, 3.3612{]} & {[}0.0000, 2.7895{]} & {[}0.0000, 4.7088{]} \\
& & \ul{2.2436} & 3.3612 & \ul{2.7895} & 4.7088 \\
& 10 & {[}0.0000, 1.5644{]} & {[}0.0000, 1.5318{]} & {[}0.0000, 1.8661{]} & {[}0.0000, 1.8898{]} \\
& & 1.5644 & \ul{1.5318} & \ul{1.8661} & 1.8898 \\
& 20 & {[}0.0070, 1.1758{]} & {[}0.0000, 0.9216{]} & {[}0.0000, 1.3687{]} & {[}0.0000, 1.0454{]} \\
& & 1.1758 & \ul{0.9216} & 1.3687 & \ul{1.0454} \\
& 50 & {[}0.0123, 0.8877{]} & {[}0.2763, 0.6234{]} & {[}0.0000, 0.9933{]} & {[}0.2386, 0.6620{]} \\
& & 0.8755 & \ul{0.3472} & 0.9933 & \ul{0.4234} \\
10.00 & 5 & {[}0.0000, 3.0651{]} & {[}0.0000, 6.2209{]} & {[}0.0000, 3.8400{]} & {[}0.0000, 8.9531{]} \\
& & \ul{3.0651} & 6.2209 & \ul{3.8400} & 8.9531 \\
& 10 & {[}0.0000, 2.0818{]} & {[}0.0000, 2.6215{]} & {[}0.0000, 2.5055{]} & {[}0.0000, 3.3284{]} \\
& & \ul{2.0818} & 2.6215 & \ul{2.5055} & 3.3284 \\
& 20 & {[}0.0000, 1.5173{]} & {[}0.0000, 1.3945{]} & {[}0.0000, 1.7861{]} & {[}0.0000, 1.6341{]} \\
& & 1.5173 & \ul{1.3945} & 1.7861 & \ul{1.6341} \\
& 50 & {[}0.0000, 1.0755{]} & {[}0.1020, 0.7975{]} & {[}0.0000, 1.2384{]} & {[}0.0268, 0.8725{]} \\
& & 1.0755 & \ul{0.6955} & 1.2384 & \ul{0.8457} \\
\end{longtable}

Note that cases in which the interval length is the shortest appear in bold underlined.

\newpage

\setcounter{table}{1}
\refstepcounter{table}\noindent\textbf{Table 2.} Levels 0.90 and 0.95 Bayesian, IM and NIM CIs and their interval widths under different combinations of observations \((x,w,\ m)\).\label{tab:2}

\begin{longtable}[]{@{}
  >{\centering\arraybackslash}p{(\linewidth - 16\tabcolsep) * \real{0.0477}}
  >{\centering\arraybackslash}p{(\linewidth - 16\tabcolsep) * \real{0.0772}}
  >{\centering\arraybackslash}p{(\linewidth - 16\tabcolsep) * \real{0.0545}}
  >{\centering\arraybackslash}p{(\linewidth - 16\tabcolsep) * \real{0.1368}}
  >{\centering\arraybackslash}p{(\linewidth - 16\tabcolsep) * \real{0.1368}}
  >{\centering\arraybackslash}p{(\linewidth - 16\tabcolsep) * \real{0.1368}}
  >{\centering\arraybackslash}p{(\linewidth - 16\tabcolsep) * \real{0.1368}}
  >{\centering\arraybackslash}p{(\linewidth - 16\tabcolsep) * \real{0.1368}}
  >{\centering\arraybackslash}p{(\linewidth - 16\tabcolsep) * \real{0.1368}}@{}}
\toprule\noalign{}
\multicolumn{3}{@{}>{\centering\arraybackslash}p{(\linewidth - 16\tabcolsep) * \real{0.1794} + 4\tabcolsep}}{%
\begin{minipage}[b]{\linewidth}\centering
Parameter
\end{minipage}} & \multicolumn{3}{>{\centering\arraybackslash}p{(\linewidth - 16\tabcolsep) * \real{0.4103} + 4\tabcolsep}}{%
\begin{minipage}[b]{\linewidth}\centering
Level 0.90 CIs and interval widths
\end{minipage}} & \multicolumn{3}{>{\centering\arraybackslash}p{(\linewidth - 16\tabcolsep) * \real{0.4103} + 4\tabcolsep}@{}}{%
\begin{minipage}[b]{\linewidth}\centering
Level 0.95 CIs and interval widths
\end{minipage}} \\
\midrule\noalign{}
\endhead
\bottomrule\noalign{}
\endlastfoot
\(x\) & \(m\) & \(w\) & Bayesian & IM & NIM & Bayesian & IM & NIM \\
0 & 20 & 10 & {[}0.00, 2.30{]} & {[}0.00, 2.50{]} & {[}0.00, 1.78{]} & {[}0.00, 3.00{]} & {[}0.00, 3.19{]} & {[}0.00, 2.49{]} \\
& & & 2.30 & 2.50 & \ul{1.78} & 3.00 & 3.19 & \ul{2.49} \\
& & 20 & {[}0.00, 2.30{]} & {[}0.00, 2.02{]} & {[}0.00, 1.29{]} & {[}0.00, 3.00{]} & {[}0.00, 2.72{]} & {[}0.00, 1.96{]} \\
& & & 2.30 & 2.02 & \ul{1.29} & 3.00 & 2.72 & \ul{1.96} \\
& & 30 & {[}0.00, 2.30{]} & {[}0.00, 1.53{]} & {[}0.00, 0.80{]} & {[}0.00, 3.00{]} & {[}0.00, 2.23{]} & {[}0.00, 1.47{]} \\
& & & 2.30 & 1.53 & \ul{0.80} & 3.00 & 2.23 & \ul{1.47} \\
& & 40 & {[}0.00, 2.30{]} & {[}0.00, 1.05{]} & {[}0.00, 0.31{]} & {[}0.00, 3.00{]} & {[}0.00, 1.73{]} & {[}0.00, 0.98{]} \\
& & & 2.30 & 1.05 & \ul{0.31} & 3.00 & 1.73 & \ul{0.98} \\
0 & 50 & 10 & {[}0.00, 2.30{]} & {[}0.00, 2.80{]} & {[}0.00, 2.08{]} & {[}0.00, 3.00{]} & {[}0.00, 3.48{]} & {[}0.00, 2.73{]} \\
& & & 2.30 & 2.80 & \ul{2.08} & 3.00 & 3.48 & \ul{2.73} \\
& & 20 & {[}0.00, 2.30{]} & {[}0.00, 2.60{]} & {[}0.00, 1.89{]} & {[}0.00, 3.00{]} & {[}0.00, 3.30{]} & {[}0.00, 2.53{]} \\
& & & 2.30 & 2.60 & \ul{1.89} & 3.00 & 3.30 & \ul{2.53} \\
& & 30 & {[}0.00, 2.30{]} & {[}0.00, 2.41{]} & {[}0.00, 1.69{]} & {[}0.00, 3.00{]} & {[}0.00, 3.10{]} & {[}0.00, 2.34{]} \\
& & & 2.30 & 2.41 & \ul{1.69} & 3.00 & 3.10 & \ul{2.34} \\
& & 40 & {[}0.00, 2.30{]} & {[}0.00, 2.21{]} & {[}0.00, 1.49{]} & {[}0.00, 3.00{]} & {[}0.00, 2.90{]} & {[}0.00, 2.14{]} \\
& & & 2.30 & 2.21 & \ul{1.49} & 3.00 & 2.90 & \ul{2.14} \\
0 & 100 & 10 & {[}0.00, 2.30{]} & {[}0.00, 2.89{]} & {[}0.00, 2.20{]} & {[}0.00, 3.00{]} & {[}0.00, 3.58{]} & {[}0.00, 2.83{]} \\
& & & 2.30 & 2.89 & \ul{2.20} & 3.00 & 3.58 & \ul{2.83} \\
& & 20 & {[}0.00, 2.30{]} & {[}0.00, 2.79{]} & {[}0.00, 2.10{]} & {[}0.00, 3.00{]} & {[}0.00, 3.50{]} & {[}0.00, 2.73{]} \\
& & & 2.30 & 2.79 & \ul{2.10} & 3.00 & 3.50 & \ul{2.73} \\
& & 30 & {[}0.00, 2.30{]} & {[}0.00, 2.70{]} & {[}0.00, 2.00{]} & {[}0.00, 3.00{]} & {[}0.00, 3.39{]} & {[}0.00, 2.63{]} \\
& & & 2.30 & 2.70 & \ul{2.00} & 3.00 & 3.39 & \ul{2.63} \\
& & 40 & {[}0.00, 2.30{]} & {[}0.00, 2.59{]} & {[}0.00, 1.89{]} & {[}0.00, 3.00{]} & {[}0.00, 3.30{]} & {[}0.00, 2.53{]} \\
& & & 2.30 & 2.59 & \ul{1.89} & 3.00 & 3.30 & \ul{2.53} \\
0 & 300 & 10 & {[}0.00, 2.30{]} & {[}0.00, 2.96{]} & {[}0.00, 2.26{]} & {[}0.00, 3.00{]} & {[}0.00, 3.66{]} & {[}0.00, 2.90{]} \\
& & & 2.30 & 2.96 & \ul{2.26} & 3.00 & 3.66 & \ul{2.90} \\
& & 20 & {[}0.00, 2.30{]} & {[}0.00, 2.93{]} & {[}0.00, 2.23{]} & {[}0.00, 3.00{]} & {[}0.00, 3.64{]} & {[}0.00, 2.87{]} \\
& & & 2.30 & 2.93 & \ul{2.23} & 3.00 & 3.64 & \ul{2.87} \\
& & 30 & {[}0.00, 2.30{]} & {[}0.00, 2.90{]} & {[}0.00, 2.20{]} & {[}0.00, 3.00{]} & {[}0.00, 3.58{]} & {[}0.00, 2.83{]} \\
& & & 2.30 & 2.90 & \ul{2.20} & 3.00 & 3.58 & \ul{2.83} \\
& & 40 & {[}0.00, 2.30{]} & {[}0.00, 2.86{]} & {[}0.00, 2.17{]} & {[}0.00, 3.00{]} & {[}0.00, 3.55{]} & {[}0.00, 2.80{]} \\
& & & 2.30 & 2.86 & \ul{2.17} & 3.00 & 3.55 & \ul{2.80} \\
1 & 20 & 10 & {[}0.00, 3.51{]} & {[}0.00, 4.25{]} & {[}0.00, 3.62{]} & {[}0.00, 4.36{]} & {[}0.00, 5.07{]} & {[}0.00, 4.38{]} \\
& & & \ul{3.51} & 4.25 & 3.62 & \ul{4.36} & 5.07 & 4.38 \\
& & 20 & {[}0.00, 3.29{]} & {[}0.00, 3.77{]} & {[}0.00, 3.13{]} & {[}0.00, 4.13{]} & {[}0.00, 4.59{]} & {[}0.00, 3.90{]} \\
& & & 3.29 & 3.77 & \ul{3.13} & 4.13 & 4.59 & \ul{3.90} \\
& & 30 & {[}0.00, 3.14{]} & {[}0.00, 3.28{]} & {[}0.00, 2.63{]} & {[}0.00, 3.97{]} & {[}0.00, 4.12{]} & {[}0.00, 3.42{]} \\
& & & 3.14 & 3.28 & \ul{2.63} & 3.97 & 4.12 & \ul{3.42} \\
& & 40 & {[}0.00, 3.02{]} & {[}0.00, 2.78{]} & {[}0.00, 2.14{]} & {[}0.00, 3.85{]} & {[}0.00, 3.61{]} & {[}0.00, 2.93{]} \\
& & & 3.02 & 2.78 & \ul{2.14} & 3.85 & 3.61 & \ul{2.93} \\
1 & 50 & 10 & {[}0.00, 3.70{]} & {[}0.00, 4.55{]} & {[}0.00, 3.92{]} & {[}0.00, 4.56{]} & {[}0.00, 5.37{]} & {[}0.00, 4.68{]} \\
& & & \ul{3.70} & 4.55 & 3.92 & \ul{4.56} & 5.37 & 4.68 \\
& & 20 & {[}0.00, 3.57{]} & {[}0.00, 4.35{]} & {[}0.00, 3.72{]} & {[}0.00, 4.42{]} & {[}0.00, 5.17{]} & {[}0.00, 4.48{]} \\
& & & \ul{3.57} & 4.35 & 3.72 & \ul{4.42} & 5.17 & 4.48 \\
& & 30 & {[}0.00, 3.45{]} & {[}0.00, 4.15{]} & {[}0.00, 3.53{]} & {[}0.00, 4.30{]} & {[}0.00, 4.98{]} & {[}0.00, 4.29{]} \\
& & & \ul{3.45} & 4.15 & 3.53 & 4.30 & 4.98 & \ul{4.29} \\
& & 40 & {[}0.00, 3.36{]} & {[}0.00, 3.95{]} & {[}0.00, 3.33{]} & {[}0.00, 4.21{]} & {[}0.00, 4.79{]} & {[}0.00, 4.09{]} \\
& & & 3.36 & 3.95 & \ul{3.33} & 4.21 & 4.79 & \ul{4.09} \\
1 & 100 & 10 & {[}0.00, 3.79{]} & {[}0.00, 4.64{]} & {[}0.00, 4.02{]} & {[}0.00, 4.64{]} & {[}0.00, 5.48{]} & {[}0.00, 4.79{]} \\
& & & \ul{3.79} & 4.64 & 4.02 & \ul{4.64} & 5.48 & 4.79 \\
& & 20 & {[}0.00, 3.71{]} & {[}0.00, 4.55{]} & {[}0.00, 3.92{]} & {[}0.00, 4.56{]} & {[}0.00, 5.38{]} & {[}0.00, 4.69{]} \\
& & & \ul{3.71} & 4.55 & 3.92 & \ul{4.56} & 5.38 & 4.69 \\
& & 30 & {[}0.00, 3.63{]} & {[}0.00, 4.45{]} & {[}0.00, 3.82{]} & {[}0.00, 4.49{]} & {[}0.00, 5.28{]} & {[}0.00, 4.59{]} \\
& & & \ul{3.63} & 4.45 & 3.82 & \ul{4.49} & 5.28 & 4.59 \\
& & 40 & {[}0.00, 3.57{]} & {[}0.00, 4.35{]} & {[}0.00, 3.72{]} & {[}0.00, 4.42{]} & {[}0.00, 5.16{]} & {[}0.00, 4.49{]} \\
& & & \ul{3.57} & 4.35 & 3.72 & \ul{4.42} & 5.16 & 4.49 \\
1 & 300 & 10 & {[}0.05, 3.88{]} & {[}0.01, 4.71{]} & {[}0.07, 4.08{]} & {[}0.01, 4.73{]} & {[}0.00, 5.54{]} & {[}0.17, 4.86{]} \\
& & & \ul{3.83} & 4.71 & 4.01 & \ul{4.72} & 5.54 & 4.84 \\
& & 20 & {[}0.02, 3.84{]} & {[}0.00, 4.69{]} & {[}0.03, 4.05{]} & {[}0.00, 4.68{]} & {[}0.00, 5.50{]} & {[}0.00, 4.83{]} \\
& & & \ul{3.82} & 4.69 & 4.02 & \ul{4.68} & 5.50 & 4.83 \\
& & 30 & {[}0.00, 3.79{]} & {[}0.00, 4.64{]} & {[}0.00, 4.02{]} & {[}0.00, 4.65{]} & {[}0.00, 5.46{]} & {[}0.00, 4.79{]} \\
& & & \ul{3.79} & 4.64 & 4.02 & \ul{4.65} & 5.46 & 4.79 \\
& & 40 & {[}0.00, 3.76{]} & {[}0.00, 4.61{]} & {[}0.00, 3.99{]} & {[}0.00, 4.62{]} & {[}0.00, 5.44{]} & {[}0.00, 4.76{]} \\
& & & \ul{3.76} & 4.61 & 3.99 & \ul{4.62} & 5.44 & 4.76 \\
\end{longtable}

Note that cases in which the interval length is the shortest appear in bold underlined.

\end{document}